\documentclass[12pt]{article}

\usepackage{enumerate}

\usepackage{amsmath, amsthm, amssymb, bm, natbib}
\usepackage{threeparttable,booktabs,lscape,pdflscape,multicol}
\usepackage{multirow,colortbl,rotating}

%
%
%

\newcommand{\eps}{\epsilon}
\newcommand{\Xp}{X^p}

\newcommand{\hbetasp}{\widehat \beta_{S-p}}
\newcommand{\Sigp}{\Sigma^p}
\newcommand{\rhop}{\rho^p}

\newcommand{\tsigp}{\widetilde \Sigma^p}
\newcommand{\trhop}{\widetilde \rho^p}

\newcommand{\hSig}{\widehat \Sigma}

\newcommand{\trho}{\tilde \rho}

\newcommand{\Dsp}{D_{SS}^p}

\newcommand{\linf}[1]{\| #1 \|_{\infty}}

%
%
\def\bzero{{\bf 0}}
\def\bone{{\bf 1}}
%
%
%
%

\def\by{{\bf y}}

\def\bB{{\bf B}}

\def\bD{{\bf D}}

\def\bI{{\bf I}}

\def\bK{{\bf K}}

\def\bU{{\bf U}}

\def\bX{{\bf X}}

\def\bZ{{\bf Z}}

%
%
%
%
\def\thick#1{\hbox{\rlap{$#1$}\kern0.25pt\rlap{$#1$}\kern0.25pt$#1$}}

\def\bbeta{\boldsymbol{\beta}}

\def\bepsilon{\boldsymbol{\epsilon}}

\def\btheta{\boldsymbol{\theta}}

\def\brho{\boldsymbol{\rho}}

\def\bSigma{\boldsymbol{\Sigma}}

%
%
%

%
%

%
\def\smbalpha{\boldsymbol{{\scriptstyle{\alpha}}}}

%
%
%
%

%
%
%

%
%
%
%

%
%
%
%
%

%
%
%
%
%
%

%
%
%
%
%

%
%
%
%
%
%

%
%
%

%
%
\def\smbalpha{\widehat{\smbalpha}}

%
%
%
%
%

%
%
%
%
%

\def\hbar{\bar{ h}}

%
%
%
%
%

%
%

%
%
%







%

%

%
%
%
%

%
%
%
%
%


%
%
%
%
%

%
%
%
%
%


%
%
%
%

\def\mybox#1{\vskip1mm \begin{center}
        \hspace{.0\textwidth}\vbox{\hrule\hbox{\vrule\kern6pt
\parbox{.9\textwidth}{\kern6pt#1\vskip6pt}\kern6pt\vrule}\hrule}
        \end{center} \vskip-5mm}
\def\lboxit#1{\vbox{\hrule\hbox{\vrule\kern6pt
      \vbox{\kern6pt#1\vskip6pt}\kern6pt\vrule}\hrule}}

\def\thickboxit#1{\vbox{{\hrule height 1mm}\hbox{{\vrule width 1mm}\kern6pt
          \vbox{\kern6pt#1\kern6pt}\kern6pt{\vrule width 1mm}}
               {\hrule height 1mm}}}

%
%
%

%
%

%
%
%
%

%
%
%
%

%
%
%
%

%
%
%
%

%
%
%
%

\def\fat#1{\hbox{\rlap{$#1$}\kern0.25pt\rlap{$#1$}\kern0.25pt$#1$}}

\def\sgn{\mbox{sgn}}

%
%

\usepackage{graphicx,verbatim,epstopdf}
\usepackage{caption}
\usepackage{fancybox,url}
\usepackage{algorithm}
\usepackage{algorithmic}
\usepackage{xcolor}
\def\bSig\mathbf{\Sigma}

\newcommand{\blind}{0}

\addtolength{\oddsidemargin}{-.5in}%
\addtolength{\evensidemargin}{-1in}%
\addtolength{\textwidth}{1in}%
\addtolength{\textheight}{1.7in}%
\addtolength{\topmargin}{-1in}%

\theoremstyle{thmstyleone}%
\newtheorem{theorem}{Theorem}
%
\theoremstyle{thmstyletwo}%
\newtheorem{lemma}{Lemma}%
\newtheorem{con}{Condition}%
\theoremstyle{thmstylethree}%

\begin{document}

\def\spacingset#1{\renewcommand{\baselinestretch}%
{#1}\small\normalsize} \spacingset{1}


\if0\blind
{
  \title{\bf High-dimensional log contrast models with measurement errors}
  \author{Wenxi Tan, Lingzhou Xue  \hspace{.2cm}\\
    Department of Statistics, Penn State University\\
    Songshan Yang\\
    Center for Applied Statistics and Institute of Statistics and Big Data,\\ Renmin University of China\\
    Xiang Zhan \\
    Department of Biostatistics, Beijing International Center \\for Mathematical Research and Center for Statistical Sciences, \\Peking University 
    }
\date{}
  \maketitle
} \fi

\if1\blind
{
  \bigskip
  \bigskip
  \bigskip
  \begin{center}
    {\LARGE\bf High-dimensional log contrast models with measurement errors}
\end{center}
  \medskip
} \fi
\bigskip

\begin{abstract}
High-dimensional compositional data are frequently encountered in many fields of modern scientific research. In regression analysis of compositional data, the presence of covariate measurement errors poses grand challenges for existing statistical error-in-variable regression analysis methods since measurement error in one component of the composition has an impact on others. To simultaneously address the compositional nature and measurement errors in the high-dimensional design matrix of compositional covariates, we propose a new method named \textbf{Er}ror-\textbf{i}n-\textbf{c}omposition (Eric) Lasso for regression analysis of corrupted compositional predictors. Estimation error bounds of Eric Lasso and its asymptotic sign-consistent selection properties are established. We then illustrate the finite sample performance of Eric Lasso using simulation studies and demonstrate its potential usefulness in a real data application example. 
\end{abstract}

\noindent%
{\it Keywords:}  Compositional data; Computational biology; Error-in-variable; Log contrast models; Lasso
\vfill

\newpage
\spacingset{1.9} 

\section{Introduction} \label{sec:intro}
Compositional data, representing relative proportions or percentages of different parts that make up a whole, have a wide range of applications in many fields, including geology, ecology, social sciences and biology. In biological and biomedical research, compositional data primarily arise from  high-throughput sequencing technologies-based profiling experiments,  which share a similar measurement process in which the total abundance information is lost and most sequence counts reflect only the relative abundances (i.e., compositional) information of unique sequences of interest \citep{vandeputte2017}. Regression analysis with these compositional covariates is essential to disentangle the relationships between compositions and an outcome of interest. Due to compositionality, traditional linear regression models fail for regression analysis with compositional predictors. To address the ``curse of compositionality'' in regression analysis, the log contrast model was proposed in the context of experiments with mixtures \citep{aitchison1984}. Since then, multiple extensions have been developed, especially in the high-dimensional settings \citep{hron2012,lin2014,shi2016,wang2017,randolph2018,srinivasan2021,combettes2021,mishra2022,shi2022}.  

Much of the existing work on high-dimensional log contrast regression has focused on the clean data case. However, measurement errors are ubiquitous in many scientific endeavors. Taking the compositional sequence count data in biomedical research that motivate our study as an example, measurement errors may occur at any stage of the experimental workflow, such as DNA extraction, PCR amplification, sequencing process and even bioinformatics preprocessing procedure \citep{mclaren2019}. Moreover, studies have reported that the problem of data contamination may also be related to genome databases with a large number of mis-labeled sequences, which could potentially lead to a wrong sequencing read counts inflated by orders of magnitude \citep{gihawi2023}.
These measurement errors need to be well accommodated in statistical analysis in order to avoid potential misleading or invalid scientific findings \citep{gihawi2023}.

These measurement errors in sequencing studies are often referred to as the sequence bias by many authors working in the field of computational biology \citep{mclaren2019,zhao2021}. In those sequencing experiments, when the research of interest is an individual count variable, the issue of sequencing bias can be appropriately addressed or attenuated by multilevel modeling techniques, such as the Beta-Binomial regression \citep{martin2020} or the Poisson-Gamma model \citep{jiang2023}. However, beyond marginal analysis, it is much more difficult to analyze contaminated compositional predictors normalized from multiple count variables \citep{shi2022} or cell-type proportions associated with uncertainties \citep{cai2022},
since measurement error in one component has a ripple effect on other components due to the compositional constraint. Similar to the regression analysis, we are facing the ``curse of compositionality'' again in measurement error modelling for compositional data.

The canonical model for high-dimensional regression analysis is expressed as $\by=\bX\bbeta^*+\bepsilon$, where $\by=(y_1,\ldots,y_n)^T$ is the response vector, $\bX \in R^{n \times p}$ is the design matrix of high-dimensional covariates and $\bbeta^* \in R^p$ is the regression coefficient vector of interest. In many applications, $\bX$ may not be accurately measured and a corrupted version $\bZ$ of $\bX$ is often available. In the literature of high-dimensional statistics, many versatile methods and theories have been developed for inference of $\bbeta^*$ based on $\bZ$ \citep{rosenbaum2010,loh2012,belloni2017,datta2017}. 
In terms of compositional data, each row of $\bX$ and $\bZ$ belongs to simplex $\mathcal{S}^{p}=\{(x_1,\ldots,x_p): x_j>0, \sum_{j=1}^p x_j=1 \}$. Clearly, measurement error in one component has a ripple effect on other components due to the compositional constraint and thus, existing statistical methods and theories are not directly applicable to measurement error problems of compositional data. 

A more recent paper considered the measurement error problem in the framework of log contrast models for regression analysis with compositional predictors \citep{shi2022}. In particular, the variable correction regularized estimator proposed by \citet{shi2022} requires the knowledge of observed counts and leverages the Direchlet-Multinomial distribution to correct observed counts towards unobservable underlying compositions. However, as pointed out in a recent paper \citep{gihawi2023}, sequence read counts could be inflated by many orders of magnitude due to potential contamination issues of draft reference genomes. Assumptions of the variable correction regularized estimator no longer hold when measurement errors in counts are extremely huge. Moreover, it is sometimes less meaningful to analyze counts from different platforms than compositions \citep{allali2017}, which limits the applicability of variable correction regularized estimator to large cohort studies where samples are typically sequenced at different locations/batches using different platforms. These potential limitations motivate our investigation to take a different approach to fill this research gap.

To develop a new method addressing the aforementioned limitations,  we utilize the recently proposed mathematical model to characterize the issue of sequence bias in next-generation sequencing experiments \citep{mclaren2019,mclaren2022}. To further accommodate compositionality in the design matrix, we build our regression analysis framework upon the Aitchison log contrast model \citep{aitchison1982,aitchison1984} and then propose a new method named textbf{Er}ror-\textbf{i}n-\textbf{c}ompositional (Eric) Lasso to handle corrupted high-dimensional compositional predictors. The error bound of our Eric Lasso estimator along with its selection sign consistency property is established. In summary, we propose a new method to simultaneously address both compositional nature and measurement errors in regressors, which distinguishes it from existing ones. The novelty of our work lies in both a new methodology with desirable statistical properties and interpretations, but also the particular application which is an important and timely problem for which no satisfactory analysis methods exists so far.

The rest of this article is organized as follows. In Section \ref{method}, we first introduce some background on regression analysis with compositional covariates and on existing literature about bias correction for error-in-variable in log contrast regression. Then, we propose our \textbf{Er}ror-\textbf{i}n-\textbf{c}ompositional (Eric) Lasso method to handle regression analysis with contaminated compositional covariates. The estimation error bounds of our method and its asymptotic sign consistency selection properties are established in Section \ref{thm}. We next demonstrate the superior performance of our method both using simulation studies and real data analysis application examples in Section \ref{simu} and Section \ref{rda}, respectively. This article concludes with discussion in Section \ref{dis}. Proofs of theoretical results are provided in online supplementary materials.

\section{Methods} \label{method}
\subsection{Preliminaries}
Let $y_i$ and $(U_{i1},\ldots,U_{ip})$ denotes response of interest and $p$ compositional covariates (i.e., $\sum_{j=1}^p U_{ij}$=1 and $U_{ij}>0, \forall j$) measured from the $i$th sampling unit. To study relationships between response and compositional predictors, the following linear log contrast model \citep{aitchison1984} has been widely used:
\begin{equation} \label{true}
y_i=\sum_{j=1}^p log(U_{ij}) \beta^*_j + \epsilon_i, \:  s.t., \: \sum_{j=1}^p \beta^*_j=0,
\end{equation}
where the intercept term is omitted if both the outcome and predictors are centered. One most remarkable feature in model \eqref{true} is the zero-sum constraint on regression coefficients, which is essential to guarantee some basic principles in compositional data analysis: scale invariance, permutation invariance and subcompositional coherence \citep{lin2014,greenacre2023}. These principles are often necessary for statistically meaningful interpretations of compositional data analysis results \citep{billheimer2001}. To understand this, one can first see that each individual component $j$ of the compositional vector only carries relative information and thus $\beta_j^*$ itself is less meaningful in compositional data analysis. On the other hand, contrast between two coefficients $\beta_j^*-\beta_l^*$ is meaningful in that it can measure the relative importance of component $j$ and $l$. To get rid of potential bias in selecting a specific reference level $l$, one can use the quantity $\frac 1 p\sum_{l=1}^p(\beta^*_j-\beta^*_l)$ to measure the relative importance of component $j$ compared to the remaining ones in the compositional vector. This quantity reduces to $\beta^*_j$ if and only if $\sum_{l=1}^p \beta^*_l=0$ holds. In other words,  coefficient  $\beta^*_j$ can measure the relative importance of the $j$th component of the compositional vector under this zero-sum constraint. Therefore, it is crucial to develop interpretable regression analysis methods for compositional data under this zero-sum constraint in practice \citep{aitchison1982,lin2014,srinivasan2021,shi2022}.

In many scenarios, compositions $U_{ij}$ are not observable or measured with errors. For example, in a study on associations between gut microbial compositions and body mass index \citep{wu2011}, the gut microbial compositions are not measurable and are approximated the microbiota compositions in stool samples. In a typical sequencing study (e.g., 16S rRNA microbiome surveys or single-cell RNA-seq studies) that motivates our research, the observed compositions $O_{ij}$'s are often calculated from observed sequence counts $W_{ij}$'s. That is, $O_{ij}= W_{ij}/N_i$, where $N_i=\sum_{j=1}^p W_{ij}$ is the total counts (or sequencing depth) of sample $i$. These $W_{ij}'s$ are often called original scale of measurements for compositions. By treating these original scale of measurements as random realizations of a certain distribution with compositions being its parameters, it has been argued that modelling these original scale of measurements has advantages over Aitchison's log-ratio approaches \citep{firth2023}. However, this approach \citep{firth2023} often treat compositional measurements as response variables, which does not apply to our setting of regression with corrupted compositional predictors as explanatory variables in log contrast regression models considered in the current article. Since true compositions $U_{ij}$'s may not be directly measurable, a naive idea is to run a surrogate linear log-contrast regression model with observed measurements $W_{ij}$'s:
\begin{equation} \label{sor1}
y_i=\sum_{j=1}^p log(W_{ij}) \beta_j + \epsilon_i, \:  s.t., \: \sum_{j=1}^p \beta_j=0.
\end{equation}
Clearly, existence of measurement errors will cause departure of estimated coefficients $\hat{\bbeta}=(\hat{\beta}_1,\ldots,\hat{\beta}_p)$ of model \eqref{sor1} from the true values $\bbeta^*=(\beta^*_1,\ldots,\beta^*_p)$ of interest in model \eqref{true}. Using a bias correction approach, \citet{shi2022} has shown that the bias term $\hat{\bbeta}-\bbeta^*$ can be improved if $W_{ij}$ in model \eqref{sor1} is replaced by the corrected variables $W^c_{ij}=W_{ij} + \frac{N_i+\alpha_i+1}{2\alpha_i+1}$, where $\alpha_i$ is the over-dispersion parameter associated with sample $i$ in the Dirichlet-Multinominal distribution \citep{shi2022}.  In other words, the authors recommend to use the following model \eqref{sorc} to get more accurate inference on $\bbeta^*$:
\begin{equation} \label{sorc}
y_i=\sum_{j=1}^p log(W^c_{ij}) \beta_j + \epsilon_i, \:  s.t., \: \sum_{j=1}^p \beta_j=0.
\end{equation}

The estimator solved from \eqref{sorc} requires the knowledge of observed counts and leverages the Direchlet-Multinomial distribution to correct observed counts towards unobservable underlying compositions. However, as pointed out in a recent paper \citep{gihawi2023}, sequence read counts could be inflated by many orders of magnitude due to potential contamination issues of draft reference genomes. Assumptions of the previous method \citep{shi2022} may no longer hold when measurement errors in counts are extremely huge. Moreover, it is sometimes less meaningful to analyze counts from different platforms than compositions \citep{allali2017}, which limits the applicability of model \eqref{sorc} to large cohort studies where samples are typically sequenced at different locations/batches using different platforms. Finally, sign consistency property of the variable correction regularized estimator has not been established yet \citep{shi2022}, which further motivates our investigation to take an alternative approach to fill this research gap.

\subsection{Log contrast models with measurement errors} 
A notable nature of measurement errors in compositional covariates is the ripple effect, that is,  measurement error in one component has an impact on at least one of the other components due to the unit-sum constraint on compositions. Unfortunately, most existing high-dimensional error-in-variable regression methods and theories have focused on the unconstrained data \citep{rosenbaum2010,loh2012,datta2017}, which are not directly applicable to compositional data. In a typical sequence count technologies-based microbiome study, discrepancies between $U_{ij}$ and $O_{ij}$ (or simply sequencing bias) are thought to approximately act multiplicatively on the taxon abundances \citep{mclaren2019}. Under this assumption, the following mathematical model has been proposed by computational biologists to characterize the sequence bias issue of microbiome compositional data collected from next generation sequencing experiments \citep{clausen2022,mclaren2022}:
\begin{equation} \label{bias}
O_{ij}=U_{ij} \cdot \frac{E_{ij}}{\sum_{j=1}^p U_{ij}E_{ij}},
\end{equation}
where $E_{ij}$ denotes the measurement error term for sample $i$ and taxon $j$. That is, measurement errors are both sample-specific and taxon-specific. This is because, on the one hand, taxa are not all detected equally well and measurement error is determined by the interaction between experimental protocols and the biological/chemical/physical state of that taxon. On the other hand, samples might come from different sources and be treated by different specimens or technicians and thus measurement error may also depend on the sampling process \citep{mclaren2022}. For these reasons, we assume that measurement errors depend on taxon characteristics and the sampling process, but not on its relative abundance. That is, we assume that $E_{ij}$ and $U_{ij}$ are independent. 

The compositional nature of measurements has been clearly addressed in model \eqref{bias} since $\sum_{j=1}^p O_{ij}=1$. That is, ripple effects of measurement errors in one compositional component to another are well accommodated in model \eqref{bias}. Also, based on model \eqref{bias}, we have $\sum_{j=1}^p \beta_j log(O_{ij})=\sum_{j=1}^p \beta_j log(U_{ij})+\sum_{j=1}^p \beta_j log(E_{ij})$ holds for any $\beta_1,\ldots,\beta_p$ with $\sum_{j=1}^p \beta_j=0$. For ease of presentation, we define $X_{ij}=log(U_{ij}), Z_{ij}=log(O_{ij}),B_{ij}=log(E_{ij})$. Then, in the scale of log contrasts, we have $\bZ \bbeta =(\bX+\bB)\bbeta$, where $\bbeta$ is an arbitrary zero-sum vector. We emphasize that the formulation $\bZ=\bX+\bB$ is only true under the log contrast regression model and this formulation will largely facilitate measure error modelling for regression analysis with contaminated compositional predictors. In other word, we do have ``blessing of compositionality'' in measurement error modelling for log contrast regression with compositional covariates. We further assume that rows of $\bB$ are independent and identically distributed with zero mean, finite covariance $\bSigma_B$ and sub-Gaussian parameter $\tau^2$ for the purpose of developing further statistical inference.

Let $S =\{j : \beta^*_j \neq 0\}$ denote the support of $\bbeta^*$ and $s =|S|$ is the cardinality of $S$. For any subset $J \subset \{1,\ldots, p\}$ and any index $j \in J$, let $J^c$ denote the complement of $J$ and define $J-j = J \backslash\{j\}$. For any matrix $K\in \mathbb{R}^{a\times b}$, $K_{J}$ denotes the submatrix of the $j$th column for $j\in J\cap\{1,\dots, b\} $ of $K$, $K_{J_1J_2}$ is the submatrix formed by $(i,j)$th entries for $i\in J_1\cap\{1,\dots, a\} ,j\in J_2\cap \{1,\dots, b\}$. $\|\cdot\|_{\infty}$ and $\|\cdot\|_1$ denote the $l_\infty$ norm and  $l_1$ norm respectively. Then, our log contrast model is expressed as: $\by =\bX \bbeta^* + \bepsilon = \bX_S\bbeta^*_S + \bepsilon,  \: s.t., \: \sum_{j=1}^p \beta^*_j=\sum_{j \in S} \beta^*_j=0$,  where $\bepsilon=(\epsilon_1,\ldots,\epsilon_n)$ are independent and identically distributed random errors that follow $N(0,\sigma^2)$. While predictors $\bX$ may not be observable, we propose a new method that use its surrogate $\bZ$ to perform inference on $\bbeta^*$. We further introduce some necessary notation before introducing our new method. Let $\bX^p=(log(U_{ij}/U_{ip})) \in R^{n \times (p-1)}$ and $\bZ^p=(log(O_{ij}/O_{ip})) \in R^{n \times (p-1)}$. Without loss of generality, we pick the last component as the reference level when defining $\bX^p$ and $\bZ^p$ in order to facilitate development and presentation of our method. However, in the subsequent methodology development, we will guarantee that the proposed method is invariant on the selection of the reference component. Corresponding to these log ratios, we can partition the $p$-dimensional regression coefficients as $\bbeta=(\bbeta_{-p},\beta_p)$, where $\bbeta_{-p}=(\beta_1,\ldots,\beta_{p-1})$ is free of any constraint. Assuming all columns in $\bX$ are centered and define $\bSigma=\frac{1}{n}\bX^T\bX$. Then, one way to fit a sparse log contrast model in the high-dimensional setting is via the following $L_1$-regularization:
\begin{align}  \label{logconstrast}
\hat{\bbeta}&=  \mathop{\arg\min}_{\bbeta} \frac{1}{2n} \|\by-\bX\bbeta\|_2^2+\lambda\|\bbeta\|_1 \:,  s.t., \: \sum_{j=1}^p \beta_j=0 \nonumber \\
            &=  \mathop{\arg\min}_{\bbeta_{-p}} \frac{1}{2n} \|\by-\bX^p\bbeta_{-p}\|_2^2+\lambda\|\bD^p\bbeta_{-p}\|_1 \nonumber \\ 
            &=  \mathop{\arg\min}_{\bbeta_{-p}} \frac{1}{2} \bbeta_{-p}^T\bSigma^p\bbeta_{-p}-(\brho^p)^T\bbeta_{-p}+\lambda\|\bD^p\bbeta_{-p}\|_1,
\end{align}
where $\bSigma^p=\frac{1}{n}(\bX^p)^T\bX^p$, $\brho^p=\frac{1}{n}(\bX^p)^T\by$ and $\bD^P=(\bI_{p-1},-\mathbf{1}_{p-1})^T$.

Since $\bX$ is unobserved, we have to choose surrogates $\tilde{\bSigma}^p$ and $\tilde{\brho}^p$ to replace $\bSigma^p$ and $\brho^p$ in \eqref{logconstrast}. We begin by first constructing $\hat{\bSigma}$, an unbiased estimator of $\bSigma$. As we specify the observed design matrix $\bZ$ being contaminated by additive measurement error $\bB$, where the rows of $\bB$ are independent and identically distributed with zero mean and finite covariance $\bSigma_B$, then $\hat{\bSigma}=\frac{1}{n}\bZ^T\bZ-\bSigma_B$ is an unbiased estimator of $\bSigma$. Then, following suggestion of CoCoLasso \citep{datta2017}, The surrogate $\tilde{\bSigma}$ and $\tilde{\brho}$ for the unobserved $\bSigma=\frac{1}{n}\bX^T\bX$ and $\brho=\frac {1}{n} \bX^T\by$ are then defined as:
\[
\tilde{\bSigma} = \mathop{\arg\min}_{\bK \ge 0} \|\bK-\hat \bSigma\|_{\max} \:\:, \:\: \tilde{\brho}=\frac{1}{n}\bZ^T\by,
\]
where
$\|\bK\|_{\max}$ denotes the element-wise maximum norm of $\bK$ and $\tilde{\bSigma}$ is obtained through ADMM algorithm. Through matrix transformation, we have $\bSigma^p=\frac{1}{n}(\bX^p)^T\bX^p=(\bD^p)^T\bSigma \bD^p$. Then, surrogates $\tilde{\bSigma}^p$ and $\tilde{\brho}^p$ is calculated from
\[
\tilde{\bSigma}^p=(\bD^p)^T \tilde{\bSigma} \bD^p  \:\:, \:\:  \tilde{\brho}^p=(\bD^p)^T \tilde{\brho}.
\]
Finally, plugging these quantities calculated from observed measurements $\bZ$ into \eqref{logconstrast}, we obtain the following estimator
\begin{equation}\label{eicolasso}
\hat{\bbeta} = \mathop{\arg\min}_{\bbeta \, s.t. \bbeta=\bD^p\bbeta_{-p}}  \frac{1}{2}\bbeta_{-p}^T \tilde{\bSigma}^p \bbeta_{-p} -(\tilde{\brho}^p)^T\bbeta_{-p}+\lambda\|\bD^p\bbeta_{-p}\|_1.
\end{equation}
We term this new estimator \eqref{eicolasso} as the \textbf{Er}ror-\textbf{i}n-\textbf{c}omposition Lasso estimator or simply Eric Lasso hereafter in this paper.

\textbf{\emph{Remark 1}}: 
The core of CoCoLasso lies in the two-step procedure of constructing a good estimator for matrix $\bSigma$: first getting an unbiased estimation $\hat{\bSigma}$ and then conducting projection to obtain a positive semi-definite matrix $\tilde{\Sigma}$. In our context of regression analysis with compositional predictors, the most natural approach is to directly apply this two-step procedure to log-ratio transformed measurements $\bZ^p$ to obtain surrogates $\tilde{\bSigma}^p$ and $\tilde{\brho}^p$ of $\bX^p$, which unfortunately leads to analysis results that are sensitive to selection of the reference component and thus lacks valid interpretation for compositional data analysis. In contrast, the proposed Eric Lasso method is permutation invariant, which gives the same result under any permutation of the $p$ components. Throughout all assumptions and proofs in the subsequent parts of this article, we have intentionally ensured that conditions and conclusions are all independent from the selection of the reference component to enhance statistical interpretations of our compositional data analysis.

\textbf{\emph{Remark 2}}:
The error covariance $\bSigma_B$ might be unknown in practice, and must be obtained through estimation in such a case. Suppose we can borrow information from either independent external data or replicated data to calculate an error matrix $B_o \in \mathbb{R}^{n\times p}$ and correspondingly $\hat\bSigma_B=\frac{1}{n}B_o^TB_o$ as an estimate of $\bSigma_B$, which has also been widely assumed in literature \citep{loh2012,shi2022}. We can show that our theoretical analysis still holds when replacing $\bSigma_B$ with $\hat\bSigma_B$ in Lemma 1 in Section A of the online supplementary materials.

\section{Theoretical analysis} \label{thm}
We assume without loss of generality that $p\in S$ such that $S-p$ is well-defined. Otherwise, we could permute the compositional covariates to ensure it. This assumption is reasonable as long as we can guarantee all conditions and proofs are invariant to permutations of covariates indices, which has been well checked throughout this article. To establish the theoretical results of Eric Lasso, we need the following two regularity conditions:

\begin{con}\label{con:minis}
The matrix $\Sigma_{SS}$ satisfies
\begin{equation}
\Lambda_{\min}(\Sigma_{SS})\ge C_{\min}>0, \label{con:eigen}
\end{equation}
where $\Lambda_{\min}(K)$ denotes the minimal eigenvalue of matrix $K$.
\end{con}

\begin{con}\label{con:mu-inco}
There exists some $\xi \in (0,1]$ such that
    \begin{equation}\label{Term:mu-inco}
        \|\Sigma^p_{S^cS}(\Sigma^p_{SS})^{-1}
        \{\sgn(\beta^*_{S-p})-\sgn (\beta_p^*)1_{s-1}\}+\sgn(\beta_p^*)1_{p-s}\|_{\infty}\le 1-\xi.
    \end{equation}
\end{con}

Condition \ref{con:minis} is a common assumption in high-dimensional statistics and some implications to guarantee the permutation invariance of our analysis under this condition are derived in Lemma 3 and Lemma 4 of the online supplementary materials. Condition \ref{con:mu-inco} is taken from \citet{lin2014}, which is central to guaranteed support recovery of $L_1$ regularization. An important quantity associated with Condition \ref{con:mu-inco} is $\phi = \|D^p_{SS}(\Sigp_{SS})^{-1}(D^p_{SS})^T\|_{\infty}$. According to Proposition 2 of \citet{lin2014}, Condition \ref{con:mu-inco} and $\phi$ are permutation invariant. That is, $\phi$ and the left-hand part of Equation \eqref{Term:mu-inco} are both independent from the selection of reference component $p$. Moreover, \eqref{con:eigen} is permutation invariant naturally. So far, all our conditions are permutation invariant. We further assume that $\max_{j}\|X_j\|_2^2\le n$. With all these regulatory conditions, we now are ready to establish the main theoretical results on our Eric Lasso estimator.

\begin{theorem}[error bound and sign consistency]\label{Th: sign}
Under Condition \ref{con:minis} and Condition \ref{con:mu-inco}, let $\zeta=\max(\tau^4,\sigma^4,1)$, for $\lambda \leq \tau^2$ and $\eps \leq \min(\eps_1,\lambda /(\lambda \eps_2 +\eps_3))$ where $\eps_i$'s are bounded positive constants depending of $\Sigma$, $\beta^*_S$, $\theta$ and $\phi$, Then there exists universal constants $C$ and $c$, with probability at least $1-\delta_1$ where  $\delta_1 = p^2C\exp\left(-cn(s-1)^{-4}\eps^2\zeta^{-1} \right)+pC\exp\left(-cns^{-2}\lambda^2\xi^2\zeta^{-1} \right)$, problem(\ref{eicolasso}) has an optimal solution $\hat\bbeta$ that satisfies the following properties:
(a) $l_{\infty}$-loss: $\linf{\hat\beta_S-\beta^*_S}\le 9\phi\lambda/2$, 
(b) sign consistency: if $\min_{j\in S}|\beta_j^*|>9\phi\lambda/2$, then $\sgn(\hat\beta)=\sgn(\beta^*)$.
\end{theorem}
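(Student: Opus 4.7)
The proof proceeds by a primal-dual witness (PDW) construction in the style of Wainwright, adapted to the log-contrast setting with the linear transformation $\bD^p$. The plan is to (i) solve an oracle restricted problem in which we force $\hat\bbeta_{S^c}=0$, (ii) check that its KKT multiplier can be extended to a strictly dual-feasible subgradient on the whole of $\{1,\dots,p\}$, and (iii) translate the resulting identity into an $\ell_\infty$ bound, from which sign consistency follows under the $\beta$-min assumption.

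\textbf{Step 1 (Oracle restricted problem).} Since $p\in S$ by assumption, define the restricted estimator by
\[
\check\bbeta_{S-p}=\argmin_{\bbeta_{S-p}\in\real^{s-1}}\frac{1}{2}\bbeta_{S-p}^{\transpose}\tsigp_{S-p,S-p}\bbeta_{S-p}-(\trhop_{S-p})^{\transpose}\bbeta_{S-p}+\lambda\bigl\|\Dsp\bbeta_{S-p}\bigr\|_1,
\]
and set $\check\bbeta_{S^c}=0$. I would use Condition~\ref{con:minis} together with the concentration of $\tsigp$ around $\Sigp$ (Lemma~1 of the supplement) to show that on an event of probability at least $1-\delta_1/2$ the matrix $\tsigp_{S-p,S-p}$ is invertible, so $\check\bbeta_{S-p}$ is the unique solution, and the KKT condition reads
\[
\tsigp_{S-p,S-p}\,\check\bbeta_{S-p}-\trhop_{S-p}+\lambda(\Dsp)^{\transpose}\check z_S=\bzero,\qquad \check z_S\in\partial\|\check\bbeta_S\|_1.
\]

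\textbf{Step 2 ($\ell_\infty$-bound on the support).} Subtracting $\tsigp_{S-p,S-p}\bbeta^*_{S-p}$ and using $\bbeta_S=\Dsp\bbeta_{S-p}$ gives
\[
\check\bbeta_S-\bbeta^*_S=\Dsp(\tsigp_{SS})^{-1}\Bigl[\bigl(\trhop_{S-p}-\tsigp_{SS}\bbeta^*_{S-p}\bigr)-\lambda(\Dsp)^{\transpose}\check z_S\Bigr].
\]
Here I would control the deterministic factor via the permutation-invariant quantity $\phi=\|\Dsp(\Sigp_{SS})^{-1}(\Dsp)^{\transpose}\|_\infty$, invoking a matrix-perturbation lemma (standard Neumann series argument under $\|\tsigp_{SS}-\Sigp_{SS}\|_{\max}\le\eps$) to replace $(\tsigp_{SS})^{-1}$ by $(\Sigp_{SS})^{-1}$ up to a multiplicative factor arbitrarily close to $1$ (concretely, giving $\|\Dsp(\tsigp_{SS})^{-1}(\Dsp)^{\transpose}\|_\infty\le\tfrac{3}{2}\phi$ once $\eps$ is small enough in terms of $\eps_1$). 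Combining with $\|\trhop_{S-p}-\tsigp_{SS}\bbeta^*_{S-p}\|_\infty\le\eps\cdot\|\bbeta^*\|_1$ plus a deviation bound on $\trhop-\Sigp\bbeta^*_{-p}$ (again from the supplementary lemmas, delivering the remaining $p\,C\exp(-cn s^{-2}\lambda^2\xi^2\zeta^{-1})$ term) yields, on the intersection of the good events, $\|\check\bbeta_S-\bbeta^*_S\|_\infty\le 9\phi\lambda/2$. This is exactly claim (a), provided $\check\bbeta=\hat\bbeta$.

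\textbf{Step 3 (Strict dual feasibility on $S^c$).} To upgrade $\check\bbeta$ to the global optimum, I would extend $\check z_S$ to a vector $\check z\in\real^p$ by solving the stationarity equation for $\check z_{S^c}$:
\[
\tsigp_{S^c,S-p}\check\bbeta_{S-p}-\trhop_{S^c}+\lambda\bigl[(\bD^p)^{\transpose}\check z\bigr]_{S^c}=\bzero.
\]
Plugging in the expression for $\check\bbeta_{S-p}$ from Step~1, I would rewrite the $S^c$-component of the dual constraint, after accounting for the last row of $\bD^p$ (which contributes the $-\sgn(\beta^*_p)\bone_{p-s}$ piece), as
\[
\Sigma^p_{S^c S}(\Sigma^p_{SS})^{-1}\{\sgn(\beta^*_{S-p})-\sgn(\beta^*_p)\bone_{s-1}\}+\sgn(\beta^*_p)\bone_{p-s}+\text{(stochastic perturbation)},
\]
whose deterministic part is bounded by $1-\xi$ by Condition~\ref{con:mu-inco}. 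I would then show the stochastic perturbation is at most $\xi/2$ in $\ell_\infty$ with probability at least $1-\delta_1/2$, again using the concentration of $\tsigp$ and $\trhop$; the resulting bound $\|\check z_{S^c}\|_\infty\le 1-\xi/2<1$ is strict dual feasibility, so $\check\bbeta=\hat\bbeta$ and $\hat\bbeta_{S^c}=0$.

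\textbf{Step 4 (Sign consistency).} Under the $\beta$-min assumption $\min_{j\in S}|\beta^*_j|>9\phi\lambda/2$, Step~2 forces $\sgn(\hat\beta_j)=\sgn(\beta^*_j)$ for every $j\in S$; combined with $\hat\beta_{S^c}=0=\beta^*_{S^c}$ from Step~3, this yields $\sgn(\hat\bbeta)=\sgn(\bbeta^*)$, giving claim (b). A union bound over the two good events of Steps~2 and~3 produces the stated total probability $1-\delta_1$, with the two exponential terms coming respectively from controlling $\tsigp$ at rate $\eps$ and $\trhop-\Sigp\bbeta^*_{-p}$ at rate $\lambda\xi/s$ so as to absorb $1-\xi$ into $1-\xi/2$. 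The main obstacle I anticipate is in Step~3: because the subgradient is $(\bD^p)^{\transpose}\check z$ rather than $\check z$ itself, the reference-column of $\bD^p$ couples $\check z_S$ and $\check z_{S^c}$, and one must work through the algebra so that Condition~\ref{con:mu-inco}---which is stated in a permutation-invariant form using both $\sgn(\beta^*_{S-p})$ and $\sgn(\beta^*_p)$---can be directly applied. The remaining estimates on $\tsigp$ and $\trhop$ are routine, and the tuning of $\eps_1,\eps_2,\eps_3$ is just bookkeeping to ensure the perturbation of $(\Sigp_{SS})^{-1}$ remains controllable simultaneously for Steps~2 and~3.
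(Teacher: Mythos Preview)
Your proposal is correct and follows essentially the same primal--dual witness strategy as the paper: construct a candidate supported on $S$, bound $\|\hat\bbeta_S-\bbeta^*_S\|_\infty$ via a Neumann-series perturbation of $(\Sigp_{SS})^{-1}$ combined with the concentration of $\tsigp,\trhop$ around $\Sigp,\rhop$, and then verify strict dual feasibility on $S^c$ using Condition~\ref{con:mu-inco}. The only cosmetic difference is that you phrase Step~1 as solving the restricted convex program and extracting its subgradient $\check z_S$, whereas the paper writes down the closed-form candidate $\hat\bbeta_{S-p}=(\tsigp_{SS})^{-1}(\trhop_S-\lambda u)$ with $u=(\Dsp)^{\transpose}\sgn(\bbeta^*_S)$ directly; once sign consistency on $S$ is established these coincide, and the subsequent decomposition, the bound $\|\Dsp(\tsigp_{SS})^{-1}(\Dsp)^{\transpose}\|_\infty\le 2\phi$, and the use of the Gaussian tail bounds from \citet{lin2014} for the noise terms $\frac{1}{n}(\bX_S)^{\transpose}\bepsilon$ and $\frac{1}{n}(\Xp_{S^c})^{\transpose}\Pi\bepsilon$ are identical in both arguments.
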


\textbf{\emph{Remark 3}}:
In CoCoLasso, the parameter $\delta_1$ goes to zero when $s^2\log p/n \to 0$ as $n,p\to\infty$ and we need $(s-1)^4\log p/n \to 0$ in Eric Lasso. This discrepancy is due to the fact that the estimation of $\beta_S$ is actually done via estimation of $\beta_{S-p}$ in a $(s-1)$-dimensional space and then $\beta_p = -\sum_{j\neq p} \beta_j$ is obtained as a transformation. As a result, we have the inequality: $\linf{\hat\beta_{S}-\beta^*_S} \le (s-1)\linf{\hat\beta_{S-p}-\beta^*_{S-p}}$. In the same framework as CoCoLasso, we can only bound $\beta_{S-p}$, leading to an error term with an extra factor of $(s-1)$.

\textbf{\emph{Remark 4}}:
To understand the asymptotic implications of Theorem 1, we assume for simplicity that $\phi$ is constant.
From the expression of $\delta_1$, If $(s-1)^4\log p / n \to 0 \text{  as } n,p \to \infty$ and $\min_{j\in S}|\beta_j^*|\gg s(\zeta \log p / n)^{1/2}$, then we can choose a $\lambda$ satisfying both $\lambda \gg  s(\zeta \log p / n)^{1/2}$ and $\min_{j\in S}|\beta_j^*|>9\phi\lambda/2$ such that $\delta_1$ goes to zero and the sign-consistency of Eric Lasso is achieved.

\section{Simulation Studies} \label{simu}
We have conducted comprehensive numerical studies to evaluate the performance of the proposed Eric Lasso method. The goals of our simulation studies are: 1) to provide numerical evidence supporting the theoretical results established for Eric Lasso estimator in Theorem 1. 2) to assess robustness of the proposed method using simulated data generated from a variety of different scenarios including misspecified models.

\subsection{Simulation setup}
We used three different approaches to generate compositional predictors to evaluate the performance of the proposed method. In the first approach, we followed the simulation design of a previous paper \citep{lin2014} to use the logistic normal distribution \citep{aitchison1980} to simulate unobserved community compositions. In particular, we first simulated  a $n \times p $ latent matrix denoted as $W$ from a multivariate normal distribution $N(\btheta, \bSigma_W)$ where $(\bSigma_W)_{ij}=\rho^{|i-j|}$, where $\rho=0.5$ and $\btheta$ was set in the following way: $\theta_j = \log(0.2 p) $ for j = 1 to 5, while for other indices j, $\theta_j$ was set to 0. The unobserved community compositions were further calculated as $U_{ij} = \exp(W_{ij})/ \sum_{k=1}^p\exp(W_{ik})$ and $X_{ij}=log(U_{ij})$. Compositions generated under this scheme were very heterogeneous in that the first five components dominate the remaining components. The first a few coordinates of the true coefficient vector $\bbeta^*$ were specified as [1.2, -0.8, 0.7, 0, 0, -1.5, -1, 1.4] with remaining components being zeros. The response was then generated as $\by=\bX \bbeta + \bepsilon$, where $\epsilon_1,\ldots,\epsilon_n$ are iid errors simulated from $N(0, 0.5^2)$. Since community compositions are often unobserved, we simulated a corrupted version by adding errors. In particular, we first generated an error matrix $\bB$, whose rows were independently simulated from $N(\bzero, \tau^2\bI)$ and then generated the corrupted compositions $O_{ij}$ according to model \eqref{bias}, which were further transformed into $Z_{ij}=log(O_{ij})$. In the second simulation scenario, we considered more homogeneous community compositions by using the Dirichlet distribution to generate community compositions \citep{fiksel2022}. In particular, each row of $\bU$ was independently simulated from the Dirichlet ($\frac 1 p\bone_p$), where $\bone_p$ denotes the $p$-dimensional vector of all ones and we kept all other settings the same as the first scenario. In the third simulation scenario, we examined method performance under misspecified models violating our core model assumptions as stated in Equation \eqref{bias}. In particular, we followed the exact simulation design of variable correction regularized estimator \citep{shi2022} to generate latent counts. We first generated underlying compositions ($U_{i1},\ldots,U_{ip}$) from the above logistic normal distribution and total sequence counts $N_i$ of sample $i$ from the negative binomial distribution with mean $3\times 10^4$ and variance $3\times 10^6$. Then, observed counts $(W_{i1},\dots,W_{ip})$ were simulated from DirMult$(N_i,\alpha U_{i1},\dots,\alpha U_{ip})$ with $\alpha=5000$, where DirMult denotes Dirichlet-Multinomial distribution. Following suggestion of \citet{shi2022} on handling zero counts, we calculated the observed design matrix $\bZ$ as $Z_{ij}=log((W_{ij}+0.5)/\sum_j(W_{ij}+0.5))$. For ease of presentation, we term these three different data generation mechanisms as Scenario 1, 2 and 3, respectively, hereafter.

Simulation configuration parameters $(n,p,\tau)$ have not been specified yet. We followed simulation designs of two major competitors \citep{lin2014,datta2017} of Eric Lasso to evaluate the performance of our method in a comprehensive manner. In Simulation I, we fixed $\tau=0.5$ and varied $(n,p)=(100,200), (250,400), (500,500)$ to evaluate the performance of our method under different data sizes. This type of setting was considered in the compositional Lasso paper \citep{lin2014}. In Simulation II, we fixed $(n,p)=(100,100)$ to mimic a real data set analyzed later in this paper and varied $\tau$ to evaluate the robustness of the proposed method with respect to the noise level of measurement errors. This type of setting was considered in the CoCoLasso paper \citep{datta2017}. Since Scenario 3 does not contain $\tau$ as a configuration parameter, it was only considered under Simulation I.

After each dataset was simulated under a particular scenario and setting. We applied four different methods to data $(\by,\bZ)$ to obtain the regression coefficient estimator $\hat{\bbeta}$ and then compared it to true regression coefficients $\bbeta^*= (1.2, -0.8, 0.7, 0, 0, -1.5, -1, 1.4,0,\ldots,0)$. The four methods are the proposed Eric Lasso estimator, the compositional data analysis version of Lasso estimator \citep{lin2014}, the CoCoLasso estimator \citep{datta2017} and the vanilla Lasso \citep{tibshirani1996}. Among these methods, compositional Lasso incorporates the compositional nature in data but ignores measurement errors. CoCoLasso accommodates measurement errors yet fails to model the compositional constraint in the regression covariates space. The original Lasso method fails to accommodate neither characteristics of the data and the proposed Eric Lasso method is the only one to take both aspects into account. For ease of presentation, we refer to these Lasso methods as Eric, Coda, CoCo and Vani, respectively, hereafter in this article. The tuning parameter selection in Eric and CoCo Lasso was done by using the  calibrated cross validation method \citep{datta2020} and that in Coda and Vani Lasso was done by using cross validation. 

To compare estimation accuracy of different methods, we calculated the following three metrics based on $\bbeta^*-\hat{\bbeta}$ for each method, including the squared error (SE), the prediction error (PE) and the $l_{\infty}$ loss:
\[SE(\hat\bbeta)=\|\bbeta^*-\hat\bbeta\|_2^2, \:\: PE(\hat\bbeta)=(\bbeta^*-\hat\bbeta)^T \bSigma (\bbeta^*-\hat\bbeta), \:\: l_{\infty}(\hat\bbeta)=\max |\bbeta^*-\hat\bbeta|.\]
To compare the sign consistency on selection of different methods, we calculated the following two metrics based on $sgn(\bbeta^*)-sgn(\hat{\bbeta})$, including the false negative rate (FNR) and false positive rate (FPR). These quantities are commonly used  to assess the sensitivity and specificity of a method and are defined as:
\[
FPR=\frac{\#\{j:\hat{\beta}_j \ne 0 \cap \beta^*_j=0\}}{\#\{j:\beta^*_j=0\}}, \:\:
FNR=\frac{\#\{j:\hat{\beta}_j = 0 \cap \beta^*_j \ne 0\}}{\#\{j:\beta^*_j \ne 0\}}.
\]
For each specific simulation scenario, we repeated 100 times to obtain multiple values of these metrics and reported their mean values along with the standard errors of the mean in the following section.

\subsection{Simulation results}
As discussed in Section \ref{method}, model interpretation is important in compositional data analysis. In particular, an important concern in log contrast regression is the zero-sum constraint on regression coefficients, which can guarantee some basic principles in compositional data analysis and has been well recognized in literature \citep{lin2014,shi2022}. Towards this end, we first examine whether the sum of regression coefficients estimated by different Lasso methods is close to zero. For each simulation run, we calculated the value of $\sum_{j=1}^p \hat{\beta}_j$ and then reported the average value over 100 replicates. A t-test was also performed to examine whether it is significantly different from zero. Results under Scenario 1 of Simulation I are reported in Table 1. As shown in Table 1, both Eric Lasso and Coda Lasso obtain estimated coefficients with a sum very close to zero, and thus provide compositional data analysis with meaningful interpretations. On the other hand, CoCo Lasso and Vanilla Lasso fail to preserve the zero-sum constraint on regression coefficients, which makes it more difficult to interpret the corresponding results in the framework of log contrast regression models for compositional data analysis. We observe a similar phenomenon under other scenarios (reported in Section B of the online supplementary materials). 

\begin{table}[h]
\centering
\caption{The average sum of regression coefficients estimated by different Lasso methods and corresponding t-test p values under Scenario 1 of Simulation I.} \label{tab:1}
\begin{tabular}{llll}
\toprule
(n,p)     & Model & $\sum \beta_j$  & p value   \\ 
\midrule
(100,200) & Eric  & -2.4e-08        & 0.600     \\
          & Coda  & -3.2e-08        & 0.406     \\
          & CoCo  & -3.3e-01        & 1.3e-5    \\
          & Vani  & -4.1e-01        & 1.9e-10   \\
\hline
(250,400) & Eric  & -2.2e-08        & 0.483     \\
          & Coda  & -3.0e-08        & 0.218     \\
          & CoCo  & -4.2e-02        & 0.381     \\
          & Vani  & -1.3e-01        & 0.002     \\
\hline
(500,500) & Eric  & -1.6e-08        & 0.628     \\
          & Coda  & -1.1e-08        & 0.664     \\
          & CoCo  &  6.9e-02        & 0.026     \\
          & Vani  & -1.0e-01        & 4.9e-05  \\ 
\bottomrule
\end{tabular}
\end{table}

We next compare the estimation and selection performance of difference Lasso methods. Results under Scenario 1 and 2 of Simulation I are reported in Table \ref{tab:2} and those under Scenario 3 of Simulation I (model misspecification) are  reported in Table \ref{tab:3}.  As shown in these tables, Eric Lasso and CoCoLasso consistently have better estimation performance (in terms of SE, PE and $l_{\infty}$ loss) than those of Coda and Vanilla Lasso. Under all three scenarios, we observe that estimation errors of Eric Lasso tend to decrease as the sample size $n$ increases, with the only exception being the PE of Eric Lasso under Scenario 3. As for the  $l_{\infty}$-loss of Eric Lasso implicated in Theorem 1, we do observe that it tends to vanish as sample size get larger and larger, even when the data are generated from a misspecified model under Scenario 3. For selection accuracy, the FPR of Coda and Vani Lasso are significantly higher than those of Eric and CoCo Lasso under Scenario 1. A similar phenomenon has been observed in the literature that Coda Lasso tends to select more unnecessary false positives to recover the true model \citep{susin2020,srinivasan2021}. On the other hand, differences in FNR of four methods under Scenario 1 are smaller. Patterns under Scenario 2 are similar to those under Scenario 1, and overall, Eric Lasso has the best selection performance under these two scenarios. When models are misspecified under Scenario 3, a remarkable change is that Eric and CoCo Lasso have much worse FNR compared to Coda and Vani Lasso, which is not surprising given that the FPR of Coda and Vani are two to five times of those of Eric and CoCo. There is no methods that are uniformly better than others in terms of both FPR and FNR under this scenario. In summary, only two methods (Eric and Coda) can lead to valid statistical interpretations for log contrast models. Between this two methods, Eric is consistently better than Coda both in terms of estimation and selection performance across all three scenarios considered in Simulation I.

\begin{table}[h]
\centering
\caption{\small{Comparison of different Lasso estimators under Scenario 1 (top half) and 2 (bottom half) of Simulation I. Mean values (standard errors) of different evaluation metrics are calculated based on 100 simulation replicates.}} \label{tab:2}
\scalebox{0.75}{
\begin{tabular}{@{}llllllll@{}} 
\toprule
(n,p)  & Model & SE       & PE         & $l_{\infty}$  & FPR       & FNR        \\ 
\midrule
    & Eric  & 2.16(0.08)  & 0.77(0.03)  & 0.82(0.02)  & 0.08(0)    & 0.17(0.01)\\   
    & Coda  & 2.91(0.08)  & 1.10(0.04)  & 0.89(0.02)  & 0.11(0.01) & 0.16(0.01)\\
(100,200)& CoCo  & 2.22(0.10)& 0.75(0.03)  & 0.82(0.02)  & 0.09(0)    & 0.16(0.02)\\
    & Vani  & 2.96(0.09)  & 1.03(0.04)  & 0.89(0.02)  & 0.13(0.01) & 0.17(0.02)\\
\hline
    & Eric & 1.06(0.03) & 0.43(0.01) & 0.59(0.01) & 0.05(0) & 0.03(0.01) \\
    & Coda & 1.95(0.03) & 0.83(0.01) & 0.73(0.01) & 0.09(0) & 0.03(0.01)  \\
(250,400)& CoCo & 1.03(0.04) & 0.42(0.01)  & 0.59(0.01) & 0.06(0) & 0.02(0.01) \\
    & Vani & 1.92(0.03) & 0.78(0.02) & 0.72(0.01) & 0.11(0.01) & 0.03(0.01) \\
\hline
    & Eric  & 0.59(0.02) & 0.26(0.01)  & 0.45(0.01)  & 0.03(0)  & 0(0)\\
    & Coda  & 1.54(0.02) & 0.72(0.01)  & 0.65(0.01)  & 0.07(0)  & 0(0)\\
(500,500)&CoCo & 0.57(0.01) & 0.26(0.01)  & 0.46(0.01)  & 0.03(0)  & 0(0)\\
    & Vani  & 1.54(0.02) & 0.70(0.01)  & 0.64(0.01)  & 0.08(0)  & 0(0)  \\
\midrule
\midrule
    & Eric  & 0.07(0)   & 0.73(0.02)  & 0.13(0.01)  & 0.18(0.01)  & 0(0)\\
    & Coda  & 0.09(0.01)& 0.84(0.02)  & 0.16(0.01)  & 0.19(0.01)  & 0(0)\\
(100,200)& CoCo& 0.07(0)   & 0.76(0.02)  & 0.13(0.01)  & 0.20(0.01)  & 0(0)\\  
    & Vani  & 0.09(0.01)& 0.85(0.02)  & 0.16(0.01)  & 0.21(0.01)  & 0(0)\\ 
\hline
    & Eric & 0.06(0.01)  & 0.32(0.01) & 0.13(0.01)  & 0.09(0.01) & 0(0) \\
    & Coda & 0.09(0.01)  & 0.48(0.01) & 0.17(0.01)  & 0.11(0.01) & 0(0) \\
(250,400)& CoCo& 0.06(0.01)& 0.34(0.01) & 0.13(0.01)  & 0.11(0.01) & 0(0) \\
    & Vani & 0.09(0.01)  & 0.49(0.01) & 0.17(0.01)  & 0.13(0.01) & 0(0)\\
\hline
    & Eric  & 0.03(0)  & 0.16(0.01)  & 0.09(0)  & 0.05(0)  & 0(0)\\
    & Coda  & 0.06(0)  & 0.34(0.01)  & 0.13(0)  & 0.07(0)  & 0(0)\\
(500,500)&CoCo & 0.03(0)  & 0.17(0.01)  & 0.09(0)  & 0.06(0)  & 0(0)\\
    & Vani  & 0.06(0)  & 0.34(0.01)  & 0.13(0)  & 0.08(0)  & 0(0)\\
\bottomrule
\end{tabular}
}
\end{table}

\begin{table}[h]
\centering
\caption{Comparison of different Lasso estimators under Scenario 3 of Simulation I. Mean values (standard errors) of different evaluation metrics are calculated based on 100 simulation replicates.} \label{tab:3}
\begin{tabular}{@{}lllllllll@{}} 
\toprule
(n,p)  & Model & SE            & PE            & $l_{\infty}$  & FPR           & FNR        \\ 
\midrule
    & Eric & 2.27(0.05) & 0.86(0.02) & 0.81(0.01) & 0.08(0) & 0.17(0.01) \\
    & Coda & 2.44(0.07) & 0.92(0.03) & 0.83(0.01) & 0.14(0.01) & 0.11(0.01) \\
(100,200) & CoCo & 2.45(0.05) & 0.86(0.02)  & 0.83(0.01)  & 0.08(0)  & 0.22(0.01) \\
    & Vani & 2.43(0.07) & 0.87(0.03) & 0.82(0.01) & 0.16(0.01) & 0.12(0.01) \\
\hline
    & Eric & 2.12(0.04) & 0.99(0.02) & 0.78(0.01) & 0.03(0) & 0.15(0.01)  \\
    & Coda & 2.23(0.04) & 1.03(0.02) & 0.85(0.01) & 0.13(0.01) & 0.02(0.01)  \\
(250,400) & CoCo & 2.22(0.04) & 0.98(0.02) & 0.78(0.01) & 0.03(0) & 0.18(0.01) \\
    & Vani & 2.15(0.04) & 0.97(0.01) & 0.84(0.01) & 0.15(0.01) & 0.01(0) \\
\hline
    & Eric & 1.51(0.04) & 0.73(0.02) & 0.67(0.01) & 0.02(0.01) & 0.04(0.01) \\
    & Coda & 1.90(0.02) & 1.02(0.01) & 0.85(0.01) & 0.12(0)    & 0(0)         \\
(500,500) & CoCo & 1.54(0.04) & 0.72(0.02) & 0.64(0.01) & 0.03(0) & 0.03(0.01) \\
    & Vani & 1.88(0.02) & 1.00(0.01) & 0.85(0.01) & 0.13(0.01) & 0(0)        \\
\bottomrule
\end{tabular}
\end{table}

We next evaluate the robustness of the proposed Eric Lasso method against different noise levels of measurement errors, which is measured by the $\tau$ parameter used in simulation Scenario 1 and 2 of Simulation II. To achieve this goal, we use ROC curves to compare performance of different Lasso methods under different $\tau$ values. The ROC curves of four methods are displayed in Figure \ref{Fig:1}. As the error level increases, it is more difficult to detect findings for all methods such that both FPR and TPR reduce. Clearly, the left panel shows that Eric Lasso and CoCo Lasso have larger areas under curve (AUC) than Coda and Vanilla Lasso method, which is consistent with our conclusions found in Simulation I. Furthermore, the right panel of Figure \ref{Fig:1} shows that Eric has a larger AUC than CoCo Lasso. The same is also true when comparing Coda and Vani. ROC curves under Scenario 2 show the exact same pattern and are displayed in Section B of the online supplementary materials. Therefore, under both scenarios of compositional predictors, the special consideration in Eric Lasso to accommodate compositionality does improve its  discriminative power over the classic CoCo Lasso method developed for high-dimensional Gaussian data. If the comparison is limited to methods with valid log contrast model interpretations, the area under ROC curve of Eric is significantly larger than that of Coda.

\begin{figure}[htb]
\centering
\begin{tabular}{@{}cccc@{}}
   \includegraphics[scale=0.4]{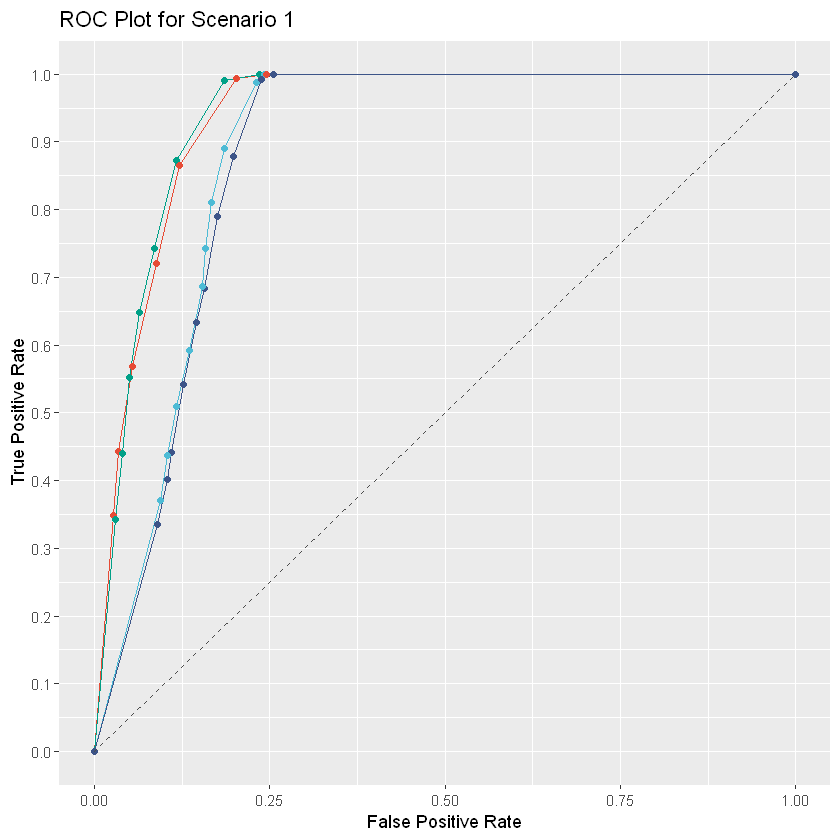}&
   \includegraphics[scale=0.4]{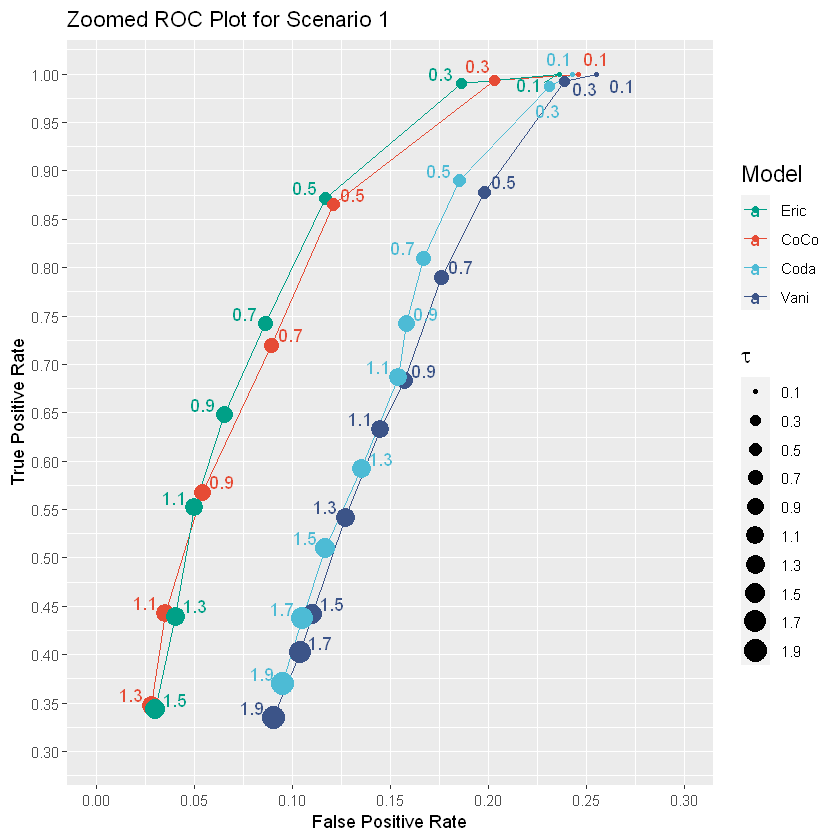}
\end{tabular}
\caption{The ROC curve with different $\tau$ values under Scenario 1. The left panel is ROC curves at original FPR and TPR scales and the right panel zooms in specific regions of FPR and TPR to better distinguish different methods.} \label{Fig:1}
\end{figure}

Combining all numerical results in Simulation I and II, the proposed Eric Lasso method stands out in obtaining a regression model with not only more accurate coefficients estimation and selection, but also valid model interpretation for compositional data regression analysis among the four Lasso methods being evaluated. Between the two methods (Eric and Coda) with valid interpretation for log contrast regression model, the performance of Eric is substantially better than that of Coda. It is of note that all aforementioned results are still true even when data were generated under a misspecified model, which demonstrates the robustness of Eric Lasso. Therefore, our new method is the best candidate for high-dimensional regression analysis of compositional covariates with measurement errors compared to these existing potential competitors.

\section{A case study} \label{rda}
The human gut microbiota has been shown to play a very important role in nutrient digestion and absorption \citep{wu2011} and most existing analyses have successfully shown that there exists a significant association between the overall gut microbiome community and body mass index (BMI) using permutational multivariate analysis of variance type of approaches \citep{tang2017}. However, identifying specific microbial taxa associated with the outcome is more challenging than detecting an existing overall community-level association, partially due to the compositional effect of microbiome data (individual taxa are closely related to or affected by each other). To illustrate the potential usefulness of Eric Lasso, we applied it to investigate BMI associated gut microbial taxa using data collected in the COMBO study \citep{wu2011}. A total of 3068 non-singleton operational taxonomic units (OTUs) were detected in the COMBO study. We first aggregated these OTUs into the genus level and then deleted genera that appeared in less than 2 samples, ending with up $p=80$ genera kept in subsequential statistical analysis. After data filtering and quality control, a total of $n=96$ samples were kept for further analysis and we followed the previous suggestion \citep{shi2022} to transfer these sequence counts into relative abundances for further analysis. To adjust for potential confounding effects, we first regressed BMI on total fat and caloric intake and then took residuals as outcomes for association analysis with gut microbial compositions.

Let $\by=(y_1,\ldots, y_n)$ denote the covariate-adjusted BMI values, $U_{ij}$ denote the relative abundance of the $j$th genus in the $i$th sample, $i=1,\ldots,n, j=1,\ldots,p$. To mimic potential measurement error or bias in the sequencing procedure, we reserved $U_{ij}$ as the true abundances and used perturbation to generate corrupted abundances. In particular, for each sample $i$, we first independently simulated each measurement error factor $e_{ij} \sim Unif(0.1,10)$ and then calculated corrupted abundances as $U_{ij}e_{ij}$, which were further normalized into a compositional vector ($O_{i1},\ldots,O_{ip}$) for downstream analysis. Let $X_{ij}= log(U_{ij}), Z_{ij} = log(O_{ij})$ and $\bX = \{X_{ij}\}, \bZ=\{Z_{ij}\}$ be the corresponding design matrix. Our goal is to use the noisy version $(\by,\bZ)$ to infer the relation of $(\by,\bX)$. To achieve this goal, we reserved $\bX$ from model fitting and used it for evaluation purpose only. Following previous analyses \citep{lin2014,shi2022} on this dataset, we generated bootstrap samples $(\by^b,\bX^b,\bZ^b)$ of size $n/2$ from the full dataset $(\by,\bX,\bZ)$, and then used observations in the bootstrap sample for model training and the other observations not selected in the bootstrap sample for prediction evaluation. This whole procedure (including generating $\bZ$ matrix) was repeated for N=100 times and let $(\by^b,\bX^b,\bZ^b), b=1,\ldots,N$ denote the $b$th  bootstrap sample.

For each bootstrap sample, we fitted the model using $(\by^b, \bZ^b)$ to obtain $\hat\bbeta^b$. We keep track of observations used for prediction evaluation to make sure they are not contained in the bootstrap samples used for model training. Let $C_{-i}$ denote the indices of bootstrap samples that do not contain observation $i$. That is, 
\[ C_{-i}=\{b\in 1,\dots,N| x_i \notin \bX^b\}. \]
Then, the average (over $N$ bootstrap samples) leave-one-out (LOO) squared prediction error on observation $i$ is calculated as $\sum_{b\in C_{-i}}(y_i-X_i\hat\beta^b)^2/|C_{-i}|$, where $|C_{-i}|$ denotes the cardinality of set $C_{-i}$. And the mean squared error over all observations is given by:
\[
MSE_{LOO} = \dfrac{1}{n} \sum_{i=1}^n \dfrac{1}{|C_{-i}|} 
\sum_{b\in C_{-i}}(y_i-X_i\hat\beta^b)^2,
\]
which is used to compare different Lasso estimators. The other metric is the mean absolute error, which can be analogously defined as: 
$$
MAE_{LOO} = \dfrac{1}{n} \sum_{i=1}^n \dfrac{1}{|C_{-i}|} 
\sum_{b\in C_{-i}}|y_i-X_i\hat\beta^b|.
$$

Results on prediction errors are reported in Table \ref{tab:4}, where one can see that the proposed Eric Lasso method has the best prediction performance among all four Lasso methods being considered. Coda Lasso has the second best performance in terms of leave-one-out mean squared prediction error and CoCo Lasso has the second best performance in terms of leave-one-out mean absolute prediction error.

\begin{table}[h]
\centering
\caption{Prediction errors on the gut microbiome data by different Lasso estimators.}
\begin{tabular}{@{}lll@{}}
\toprule
Model & $MSE_{LOO}$ & $MAE_{LOO}$ \\
\midrule
Eric & 32.838 & 4.316 \\
Coda & 34.928 & 4.468 \\
CoCo & 35.580 & 4.379 \\
Vani & 35.988 & 4.450 \\
\bottomrule
\end{tabular}
\label{tab:4}
\end{table}

Recall that results established in Theorem 1 guarantee performance accuracy in both estimation and selection. We thus compare selection results of different Lasso methods. Out of the $N=100$ bootstrap replicates, the frequency of each genus taxon being selected by each Lasso method was calculated and presented in Figure \ref{fig:2}. As can be seen in the figure, Coda and Vani Lasso tend to select more taxa than Eric and CoCo Lasso. For example, there are 1, 6, 1, 4 taxa that are selected by Eric, Coda, CoCo, Vani Lasso, respectively, over 50 times out of the 100 bootstrap replicates. The distribution of taxa relative abundances in this COMBO data is very heterogeneous in that the top 5 most abundant taxa account for 80\% of the total abundances of all $p$ taxa. Hence, the unbalanced compositions generated under Scenario 1 and 3 of Simulation I better mimic the distribution of taxa relative abundances in this COMBO dataset than the more homogeneous case used in Scenario 2 of Simulation I. As observed in Table \ref{tab:2} and Table \ref{tab:3} presented in the previous section, the FPR of Coda and Vani under Scenario 1 and 3 are 2-5 times of those of Eric and CoCo Lasso, which may explain why Coda and Vani Lasso have more findings in this data. In other words, our previous experience in numerical simulations implies that many of these additional findings of Coda and Vani in this COMBO dataset might be false positives. The same phenomenon on spurious findings caused by ignoring measurement errors in microbiome compositional data in statistical analysis has also been observed in the literature \citep{hawinkel2019,gihawi2023}. Finally, the common taxon that is selected more than 50\% of times by all methods is genus \textit{Acidaminococcus} of the \textit{Firmicutes} phylum, which was implicated as important for gut dysbiosis in obese patients \citep{wu2011}. In summary, the proposed Eric Lasso method can overall provide the best model prediction and variable selection performance in this gut microbiome data analysis collected from the COMBO study.

\begin{figure}[h]
\centering
\includegraphics[width=0.8\linewidth]{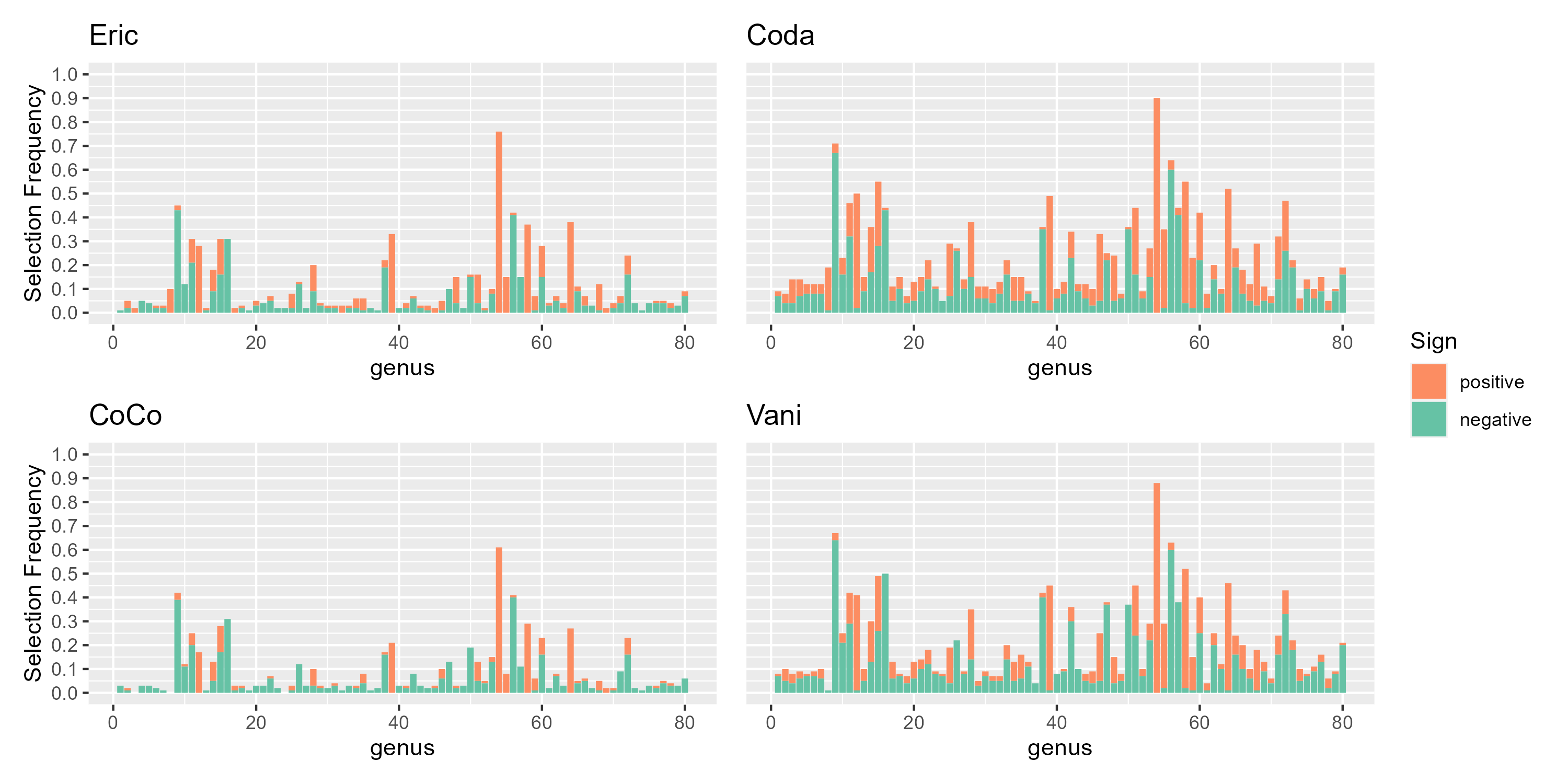}
\caption{Selection frequencies of each taxon under N=100 bootstrap replicates.}
\label{fig:2}
\end{figure}

\section{Discussion} \label{dis}
Motivated by the recent debate on how data contamination could invalidate scientific findings in microbiome research \citep{gihawi2023,poore2020}, we have proposed a novel Eric Lasso approach to combat negative effects of measurement errors in compositional data analysis in order to get more accurate and reliable statistical analysis results. While existence of measurement errors or sequence bias in microbiome compositional data has been widely observed in the literature \citep{mclaren2019,mclaren2022,zhao2021}, statistical methods to address this issue in the high-dimensional regression framework are less developed. As far as we are concerned, the only existing method tailored for this specific problem is the variable correction regularized estimator \citep{shi2022}, which treats the observed counts (contaminated copy) as realizations from an underlying Dirichlet-Multinomial distribution whose parameters are unobserved compositions (clean copy). A major limitation of the variable correction regularized estimator method is requirement on knowledge of total counts of each sample, which however, can be mistakenly measured by many orders of magnitude in practice \citep{gihawi2023}. To gain robustness against such a huge extent in data contamination, we adopt an alternative approach which directly targets at compositions rather than counts. As shown in our numerical studies (i.e., Scenario 3 of Simulation I and the case study in the previous section), our new Eric Lasso approach has a better overall performance than the variable correction regularized estimator method and the improvement in reducing false positive findings is substantial when the underlying true compositions are heterogeneous. While we have only employed the Eric Lasso method to microbiome compositional data analysis to illustrate its potential usefulness and superior performance in this article, the proposed methodology is very general and can be applied to other types of high-dimensional compositional data as well. Finally, similar to previous publications in this research vein \citep{loh2012,datta2017}, we assume the covariance matrix of measurement errors is known or can be estimated from independent sources such as external data or multiple replicates of the data. In the future work, we intend to explore different backup strategies (e.g.,resampling techniques) when such information is less available.

\bibliographystyle{chicago}
\bibliography{myref}

\clearpage
\noindent  {\huge Supplementary Materials for ``High-dimensional log contrast models with measurement errors''}

\setcounter{subsection}{0}
\setcounter{equation}{0}
\setcounter{table}{0}
\setcounter{figure}{0}
\renewcommand\thesection{\Alph{section}}
\renewcommand{\theequation}{S.\arabic{equation}}
\renewcommand{\thefigure}{S.\arabic{figure}}
\renewcommand{\thetable}{S.\arabic{table}}

\section*{A Proofs}
For ease of presentation, we first introduce the following notations:
$$
\begin{array}{cc}
\tilde A=D^p_{SS}(\tsigp_{SS})^{-1}(D^p_{SS})^T & A=D^p_{SS}(\Sigp_{S S})^{-1}(D^p_{SS})^T \\
E=\tsigp-\Sigp & F_{SS}=(\tsigp_{SS})^{-1}-(\Sigp_{SS})^{-1}\\  
G=\Sigp_{S^cS}(\Sigp_{SS})^{-1} & \tilde G = \tsigp_{S^cS}(\tsigp_{SS})^{-1}\\
H=\tilde G-G & b=\|\beta_S^*\|_{\infty}\\
\psi=\|\Sigma_{SS}\|_{\infty} &  \nu=\linf{\Sigma_{S}}\\
\end{array}\\
$$
$$
\Pi = I-\frac 1n \Xp_S(\Sigp_{SS})^{-1}(\Xp_S)^T
$$
We adopt the identical closeness condition as used in CoCoLasso \citep{datta2017}.

\begin{con}[Closeness condition] \label{con:close} 
Let us assume that the distribution of  $\hat \Sigma$ and $\tilde\rho$ are identified by a set of parameters $\theta$. Then there exists universal constants $C$ and $c$, and positive functions $\zeta$ and $\epsilon_0$ depending on $\beta^*_S$, $\theta$ and $\sigma^2$ such that for every $\epsilon \le \epsilon_0$, $\hat\Sigma$ and $\tilde\rho$ satisfy the following probability statements:
\begin{equation}\label{Eq: close}
\begin{array}{c}
\rm pr(|\hSig_{ij}-\Sigma_{ij} | \ge \eps) \leq C \exp \left(-cn\eps^2\zeta^{-1} \right) \;  \; i,j=1,\ldots,p, \\
\rm pr(|\tilde \rho_j-\rho_j | \geq \eps) \leq C \exp \left(-cns^{-2}\epsilon^2\zeta^{-1} \right) \;  \; j=1,\ldots,p.
\end{array}
\end{equation}
\end{con}

As indicated in CoCoLasso, the closeness condition holds for $\hSig$ with $\zeta=\max(\tau^4,\sigma^4,1)$ and $\eps_0=\tau^2$. In scenarios where $\Sigma_B$ is unknown and is substituted by $\hat\Sigma_B=\frac{1}{n}B_o^TB_o$, this condition still satisfies with the same parameter values as guaranteed by the following Lemma \ref{lem: unknown Sig_B}:

\begin{lemma}\label{lem: unknown Sig_B}
$\hSig_o=\frac{1}{n}Z^TZ-\hSig_B$ and $\trho$ satisfy the closeness condition \eqref{Eq: close} with $\zeta=\max(\tau^4,\sigma^4,1)$ and $\eps_0=\tau^2$. 
\end{lemma}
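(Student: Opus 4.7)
The plan is to decompose the deviation $\hSig_o - \Sigma$ into a piece already controlled by the closeness condition for the known-$\Sigma_B$ estimator $\hSig = n^{-1} Z^T Z - \Sigma_B$, plus a sample-covariance fluctuation $\hat\Sigma_B - \Sigma_B$ which can be bounded separately, and then to combine the two via a union bound. The claim for $\trho$ is inherited verbatim from the original argument because $\trho = n^{-1} Z^T y$ does not involve any estimate of $\Sigma_B$.

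Concretely, I would write
\[
\hSig_o - \Sigma \;=\; (\hSig - \Sigma) + (\Sigma_B - \hat\Sigma_B),
\]
so that for every entry $(i,j)$ and every $\eps > 0$,
\[
\pr\bigl(|(\hSig_o - \Sigma)_{ij}| \ge \eps\bigr) \;\le\; \pr\bigl(|(\hSig - \Sigma)_{ij}| \ge \eps/2\bigr) + \pr\bigl(|(\hat\Sigma_B - \Sigma_B)_{ij}| \ge \eps/2\bigr).
\]
The first term is directly controlled by the closeness condition already established for $\hSig$ with $\zeta = \max(\tau^4,\sigma^4,1)$ and $\eps_0 = \tau^2$. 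For the second term, the rows of $B_o$ are i.i.d.\ mean-zero sub-Gaussian with parameter $\tau^2$ (the same distributional assumption imposed on $B$ in Section 2.2), so each coordinate is sub-Gaussian with parameter on the order of $\tau$ and each product $B_{o,ki} B_{o,kj}$ is sub-exponential with parameter on the order of $\tau^2$. Each entry of $\hat\Sigma_B - \Sigma_B$ is the mean of $n$ i.i.d.\ centered sub-exponentials, so Bernstein's inequality gives
\[
\pr\bigl(|(\hat\Sigma_B - \Sigma_B)_{ij}| \ge \eps/2\bigr) \;\le\; 2\exp\bigl(-c' n \min(\eps^2/\tau^4,\, \eps/\tau^2)\bigr),
\]
which for $\eps \le \tau^2$ reduces to $C\exp(-cn\eps^2/\tau^4) \le C\exp(-cn\eps^2\zeta^{-1})$ after using $\tau^4 \le \zeta$ and absorbing universal constants. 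Combining the two pieces yields the first probability statement in condition \eqref{Eq: close} for $\hSig_o$.

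The second probability statement (on $\trho$) is unchanged, since $\trho = n^{-1} Z^T y$ does not depend on $\Sigma_B$ nor its estimate; the original argument applies verbatim to deliver the $C\exp(-cns^{-2}\eps^2\zeta^{-1})$ bound with the same $\zeta$ and $\eps_0$.

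The main obstacle is not mathematical content but constant bookkeeping: Bernstein naturally produces a rate involving $\tau^4$ in the quadratic regime and $\tau^2$ in the linear regime, and I must verify that both fit inside the universal-constant framework with exactly $\zeta = \max(\tau^4,\sigma^4,1)$ and $\eps_0 = \tau^2$ rather than requiring a strictly larger $\zeta$. The only distributional input beyond what is already assumed for $B$ is that $B_o$ is itself mean-zero sub-Gaussian with parameter $\tau^2$, which is precisely the setup guaranteed by external or replicated data of the same type as in \citet{loh2012,shi2022}.
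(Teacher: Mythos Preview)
Your proposal is correct and follows essentially the same approach as the paper: the same decomposition $\hSig_o-\Sigma=(\hSig-\Sigma)+(\Sigma_B-\hat\Sigma_B)$, the same use of the already-established closeness bound for $\hSig$, and the same sub-exponential concentration for $\hat\Sigma_B-\Sigma_B$ (the paper simply cites Lemma~B.1 of \citet{datta2017} where you invoke Bernstein directly, and absorbs the $\eps/2$ splitting into ``redefining $C$ and $c$''). Your remark that the $\trho$ statement is untouched because $\trho$ does not involve $\Sigma_B$ is exactly right and is left implicit in the paper's proof.
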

\begin{proof}
Since $\hSig$ satisfies the closeness condition, we have
$$\rm pr(|\hSig_{ij}-\Sigma_{ij} | \ge \eps) \leq C \exp \left(-cn\eps^2\zeta^{-1} \right).$$
Using Lemma B.1 of \citet{datta2017},
$$
\rm pr(|(\hSig_B-\Sigma_B)_{ij} | \ge \eps) \leq C \exp \left(-cn\eps^2\tau^{-4} \right)\leq C \exp \left(-cn\eps^2\zeta^{-1} \right),
$$
$$
\hSig_o-\Sigma=\hSig_o-\hSig+(\hSig-\Sigma)=\Sigma_B-\hSig_B+(\hSig-\Sigma).
$$
Hence by redefining $C$ and $c$, the sub-Gaussianity of $B_o$ implies
$$
\rm pr(|(\hSig_o)_{ij}-\Sigma_{ij} | \ge \eps) \leq C \exp \left(-cn\eps^2\zeta^{-1} \right),
$$
which completes the proof.
\end{proof}

We will now proceed to prove Theorem 1 of the main text under the assumption that the closeness condition is satisfied. To establish major results presented in Theorem 1, we begin by proving the following lemmas.

\begin{lemma}\label{lem:trick} 
For any $\epsilon > 0$ we have
\begin{equation}\label{eq:prick}
\rm pr(\| \tsigp-\Sigp \|_ {\max} \ge \epsilon) \le p^2 \max_{k,l} \rm pr(|\hat \Sigma_{kl}-\Sigma_{kl} | \ge \epsilon/8).
\end{equation}
\end{lemma}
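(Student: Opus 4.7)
The plan is to peel off the transformation $\tsigp = (\bD^p)^T \tilde{\bSigma} \bD^p$ first, then exploit the defining property of the max-norm PSD projection, and finally apply a union bound.

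First, I would write $\tsigp - \Sigp = (\bD^p)^T(\tilde{\bSigma}-\bSigma)\bD^p$. Since $\bD^p = (\bI_{p-1},-\mathbf{1}_{p-1})^T$, each entry of the difference is a signed sum of at most four entries of $\tilde{\bSigma}-\bSigma$: explicitly, $(\tsigp-\Sigp)_{ij} = (\tilde{\bSigma}-\bSigma)_{ij} - (\tilde{\bSigma}-\bSigma)_{ip} - (\tilde{\bSigma}-\bSigma)_{pj} + (\tilde{\bSigma}-\bSigma)_{pp}$. Therefore
\[
\|\tsigp - \Sigp\|_{\max} \le 4\,\|\tilde{\bSigma}-\bSigma\|_{\max}.
\]

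Next I would relate $\|\tilde{\bSigma}-\bSigma\|_{\max}$ to $\|\hat{\bSigma}-\bSigma\|_{\max}$. The key observation is that $\bSigma = n^{-1}\bX^T\bX$ is positive semidefinite, hence feasible for the projection problem defining $\tilde{\bSigma}$. By the optimality of $\tilde{\bSigma}$ as the nearest PSD matrix to $\hat{\bSigma}$ in the max-norm, $\|\tilde{\bSigma}-\hat{\bSigma}\|_{\max} \le \|\bSigma-\hat{\bSigma}\|_{\max}$. Then the triangle inequality gives $\|\tilde{\bSigma}-\bSigma\|_{\max} \le 2\|\hat{\bSigma}-\bSigma\|_{\max}$, so overall
\[
\|\tsigp - \Sigp\|_{\max} \le 8\,\|\hat{\bSigma}-\bSigma\|_{\max}.
\]

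Finally, I would take probabilities on both sides and apply a union bound over the $p^2$ entries of $\hat{\bSigma}-\bSigma$:
\[
\rm pr\!\left(\|\tsigp-\Sigp\|_{\max} \ge \eps\right) \le \rm pr\!\left(\|\hat{\bSigma}-\bSigma\|_{\max} \ge \eps/8\right) \le p^2\max_{k,l}\rm pr\!\left(|\hat{\Sigma}_{kl}-\Sigma_{kl}|\ge \eps/8\right),
\]
which is exactly the claim. There is no serious obstacle; the only substantive step is recognizing that the true $\bSigma$ lies in the PSD cone so that the projection can be compared to it, yielding the factor of $2$ that (together with the factor of $4$ from the $\bD^p$ expansion) produces the constant $8$ in the statement.
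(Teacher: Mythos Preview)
Your proof is correct and follows essentially the same route as the paper's: expand the $\bD^p$-transformation entrywise to pick up the factor $4$, use the feasibility of the PSD matrix $\bSigma$ in the max-norm projection defining $\tilde{\bSigma}$ together with the triangle inequality to pick up the factor $2$, and finish with a union bound over the $p^2$ entries.
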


\begin{proof}
Plugging $$D^p=\begin{pmatrix}
I_{p-1} \\
-1_{p-1}
\end{pmatrix}
$$
into $\tilde{\Sigma}^p=(D^p)^T \tilde{\Sigma} D^p$, we have 
$$
(\tsigp)_{kl}=\tilde \Sigma_{kl}-\tilde \Sigma_{kp}-\tilde \Sigma_{pl}+\tilde \Sigma_{pp}.
$$
As $\Sigp=D_p^T\Sigma D_p$, this further implies that
$$
\| \tsigp-\Sigp \|_ {\max}\le 4 \|\tilde \Sigma-\Sigma\|_{\max}.
$$
By definition of $\tilde{\Sigma}$ and the positive semi-definite nature of $\Sigma$, we have
$$\|\tilde\Sigma-\Sigma\|_{\max}\le\|\tilde\Sigma-\hat\Sigma\|_{\max}+ \|\hat\Sigma-\Sigma\|_{\max}\le 2\|\hat\Sigma-\Sigma\|_{\max},$$
\[
\rm pr(\| \tilde \Sigma -\Sigma \|_ {\max} \geq \epsilon/4) \le \rm pr(\| \hat \Sigma -\Sigma \|_ {\max} \ge \epsilon/8). \]
The proof then follows using union bounds over $\rm pr(|\hat\Sigma_{kl}-\Sigma_{kl} | \geq \epsilon/8)$.
\end{proof}

The closeness condition requires that $\hat{\Sigma}$ and $\tilde{\rho}$ are sufficiently close to
$\Sigma$ and $\rho$, respectively. A direct implication of Lemma \ref{lem:trick} is  
\[
\| \tilde{\Sigma}^p- \Sigma^p \|_ {\max}\le 4 \| \tilde \Sigma-\Sigma\|_{\max} \le 8\|\hat\Sigma-\Sigma\|_{\max}, 
\]
which implies that $\tilde{\Sigma}^p$ is sufficiently close to $\Sigma^p$. Analogously, since ($\tilde{\rho}^p -\rho^p) = (D^p)^T(\tilde{\rho}-\rho)$, the closeness condition of $\tilde{\rho}$ on $\rho$ ensures that $\tilde{\rho}^p$ approximates $\rho^p$ well.

\begin{lemma}\label{Lem:eigen}
    $\Lambda_{\min}(\Sigma_{SS}) \le \Lambda_{\min}(\Sigp_{SS})$.
\end{lemma}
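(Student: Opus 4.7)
The plan is to compare the quadratic forms directly using the relation $\Sigma^p_{SS} = (D^p_{SS})^T \Sigma_{SS} D^p_{SS}$, which follows from the factorization $\Sigma^p = (D^p)^T \Sigma D^p$ established earlier in the paper together with the assumption $p \in S$. Once this identity is in hand, the min–eigenvalue inequality reduces to showing that the linear map $v \mapsto D^p_{SS} v$ is norm non-decreasing, so that restricting $\Sigma_{SS}$ to the image of $D^p_{SS}$ cannot lower its Rayleigh quotient.

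First I would write down $D^p_{SS}$ explicitly. Since $p\in S$, the rows of $D^p_{SS}$ are indexed by $S=(S-p)\cup\{p\}$ and its columns by $S-p$, giving
\[
D^p_{SS}=\begin{pmatrix} I_{s-1} \\ -\mathbf{1}_{s-1}^{\,T} \end{pmatrix}\in\mathbb{R}^{s\times(s-1)}.
\]
A direct computation of $(D^p_{SS})^T \Sigma_{SS} D^p_{SS}$ entry by entry yields $\Sigma_{ij}-\Sigma_{ip}-\Sigma_{pj}+\Sigma_{pp}$ for $i,j\in S-p$, which matches the formula for $(\Sigma^p)_{ij}$ used in the proof of Lemma~1. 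Hence $\Sigma^p_{SS}=(D^p_{SS})^T \Sigma_{SS} D^p_{SS}$.

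Next, I would invoke the variational characterization $\Lambda_{\min}(\Sigma^p_{SS})=\min_{\|v\|_2=1} v^T \Sigma^p_{SS} v$. For any unit vector $v\in\mathbb{R}^{s-1}$, setting $w=D^p_{SS}v\in\mathbb{R}^{s}$ gives
\[
v^T\Sigma^p_{SS}v = w^T\Sigma_{SS}w \;\ge\; \Lambda_{\min}(\Sigma_{SS})\,\|w\|_2^{2}.
\]
Because $w$ is obtained from $v$ by appending the coordinate $-\mathbf{1}_{s-1}^{\,T}v$, one has
\[
\|w\|_2^{2}=\|v\|_2^{2}+\Big(\textstyle\sum_{k=1}^{s-1}v_k\Big)^{2}\;\ge\;\|v\|_2^{2}=1,
\]
so $v^T\Sigma^p_{SS}v\ge\Lambda_{\min}(\Sigma_{SS})$. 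Minimizing the left side over unit $v$ gives the desired bound $\Lambda_{\min}(\Sigma_{SS})\le\Lambda_{\min}(\Sigma^p_{SS})$.

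There is no real obstacle here; the only subtlety is making sure the WLOG assumption $p\in S$ is actually used (it guarantees that the $\{p\}$-row of $D^p$ is among the rows of $D^p_{SS}$, so that the factorization $\Sigma^p_{SS}=(D^p_{SS})^T\Sigma_{SS}D^p_{SS}$ holds without any extra cross terms). Since permutation invariance has been checked throughout, this costs nothing and the proof is a short two-line argument after the factorization is recorded.
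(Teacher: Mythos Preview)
Your proposal is correct and follows essentially the same approach as the paper: both arguments use the Rayleigh-quotient characterization, the factorization through $D^p_{SS}$ (the paper writes it as $X^p_S = X_S D^p_{SS}$ while you write the equivalent matrix version $\Sigma^p_{SS}=(D^p_{SS})^T\Sigma_{SS}D^p_{SS}$), and the key inequality $\|D^p_{SS}v\|_2^2 \ge \|v\|_2^2$. The only cosmetic difference is that the paper chains the Rayleigh quotients of $\Sigma_{SS}$ and $\Sigma^p_{SS}$ via the data matrix $X_S$, whereas you work directly with the quadratic form; the substance is identical.
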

\begin{proof} It can be shown that $\|\Dsp v\|^2_2 \ge \|v\|^2_2$ for any $v\in\mathbb{R}^{s-1}\backslash\{0\} $. Hence we have
$$
\begin{aligned}
 \Lambda_{\min}(\Sigma_{SS})&=\frac 1n\min_{v\in\mathbb{R}^s\backslash\{0\}}\dfrac{\|X_Sv\|^2_2}{\|v\|^2_2}\\
 &\le \frac 1n\min_{v\in\mathbb{R}^{s-1}\backslash\{0\}} \dfrac{\|X_S\Dsp v\|^2_2}{\|\Dsp v\|^2_2}\\
 &\le \frac 1n\min_{v\in\mathbb{R}^{s-1}\backslash\{0\}} \dfrac{\|X_S^p v\|^2_2}{\|v\|^2_2}\\
 &=\Lambda_{\min}(\Sigp_{SS}).
 \end{aligned}
$$  
\end{proof}

\begin{lemma}\label{Lem: inv}
$\rm pr(\tsigp_{SS} > 0) \ge 1-Cp^2\exp\left(-cn(s-1)^{-2}\eps^2\zeta^{-1}\right)$ for all $\eps \le min(\eps_0,C_{\min}/16)$.
\end{lemma}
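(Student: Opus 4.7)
The plan is to invoke Weyl's inequality on $\tsigp_{SS} = \Sigp_{SS} + E_{SS}$ and show that the spectral perturbation $\|E_{SS}\|_{\rm op}$ stays strictly below $C_{\min}$ on a high-probability event. Concretely, I aim to establish
\[
\Lambda_{\min}(\tsigp_{SS}) \geq \Lambda_{\min}(\Sigp_{SS}) - \|E_{SS}\|_{\rm op} \geq C_{\min} - \|E_{SS}\|_{\rm op},
\]
where the second inequality uses Condition~\ref{con:minis} combined with Lemma~\ref{Lem:eigen} to lower-bound $\Lambda_{\min}(\Sigp_{SS})$ by $C_{\min}$. Positive definiteness then reduces to the event $\{\|E_{SS}\|_{\rm op} < C_{\min}\}$.

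To control the operator norm, I will use the elementary comparison $\|E_{SS}\|_{\rm op} \leq (s-1)\|E_{SS}\|_{\max} \leq (s-1)\|E\|_{\max}$, which is valid for the $(s-1)\times(s-1)$ block $E_{SS}$. It therefore suffices to show that $\|E\|_{\max} \leq \eps/(s-1)$ with the desired probability. Applying Lemma~\ref{lem:trick} at the deflated threshold $\eps/(s-1)$ (admissible because $\eps/(s-1) \leq \eps \leq \eps_0$) and then invoking the closeness condition (Condition~\ref{con:close}) at scale $\eps/[8(s-1)]$ yields
\[
\rm pr\left(\|E\|_{\max} \geq \frac{\eps}{s-1}\right) \leq p^2 \max_{k,l} \rm pr\left(|\hat\Sigma_{kl} - \Sigma_{kl}| \geq \frac{\eps}{8(s-1)}\right) \leq Cp^2 \exp\left(-cn(s-1)^{-2}\eps^2\zeta^{-1}\right),
\]
after absorbing the factor $1/64$ into $c$. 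On the complementary event, $\|E_{SS}\|_{\rm op} \leq \eps \leq C_{\min}/16 < C_{\min}$, so $\Lambda_{\min}(\tsigp_{SS}) \geq C_{\min} - \eps > 0$, which gives the conclusion.

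The only non-routine step is choosing to apply Lemma~\ref{lem:trick} at the deflated scale $\eps/(s-1)$ rather than at $\eps$; this is precisely what generates the $(s-1)^{-2}$ factor in the exponent of the probability bound, reflecting the cost of passing from the max norm to the operator norm on an $(s-1)$-dimensional block. The constraint $\eps \leq C_{\min}/16$ in the hypothesis is exactly what is needed so that $\eps$ lies safely below $C_{\min}$ with a comfortable margin for use in subsequent results. The boundary case $s=1$ is trivial, since then $S-p$ is empty and $\tsigp_{SS}$ is a vacuous $0\times 0$ matrix, which is positive definite by convention.
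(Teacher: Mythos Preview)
Your proposal is correct and follows essentially the same approach as the paper: Weyl's inequality plus Lemma~\ref{Lem:eigen} to lower-bound $\Lambda_{\min}(\Sigp_{SS})$ by $C_{\min}$, the crude bound $\|E_{SS}\|_{\rm op}\le (s-1)\|E\|_{\max}$, and then Lemma~\ref{lem:trick} together with the closeness condition applied at a threshold of order $\eps/(s-1)$ to produce the $(s-1)^{-2}$ factor in the exponent. The only cosmetic difference is that the paper takes the threshold $8\eps/(s-1)$ (yielding $\Lambda_{\min}(\tsigp_{SS})\ge C_{\min}/2$) whereas you take $\eps/(s-1)$ (yielding $\Lambda_{\min}(\tsigp_{SS})\ge C_{\min}-\eps$), but this is immaterial once constants are absorbed into $c$.
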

\begin{proof}
Using Lemma \ref{Lem:eigen},
\begin{align*}
 \Lambda_{\min}(\tsigp_{SS}) \ge & \Lambda_{\min}(\Sigp_{SS})- |\Lambda_{\max}(-E_{SS})|  \geq C_{\min}-\|E_{SS}\|_2 \\
 \geq& C_{\min} - (s-1) \|E_{SS} \|_ {\max} \geq C_{\min} - (s-1) \|E\|_ {\max} \geq C_{\min}/2,
\end{align*}
where the last inequality occurs when $\|E\|_{\max} \le 8(s-1)^{-1}\eps$ with probability at least $1-Cp^2\exp\left(-cn(s-1)^{-2}\eps^2\zeta^{-1}\right)$ for $\eps \leq min(\eps_0,C_{\min}/16)$, according to the closeness condition and Lemma \ref{lem:trick}.
\end{proof}

\begin{lemma}\label{lem:gauss}
\begin{equation}
 \|\frac 1n (X_S)^Tw\|_{\infty} \le\lambda/2
\end{equation}
holds with probability at least $1-s\exp\left(-n\lambda^2/(8\sigma^2)\right)$, and
\begin{equation}
\frac 1n\linf{(\Xp_{S^c})^T\Pi w} \le \lambda \xi/4
\end{equation}
holds with probability at least $1-(p-s)\exp\left(-n\lambda^2\xi^2/(128\sigma^2)\right)$.
\end{lemma}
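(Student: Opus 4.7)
The plan is to derive both inequalities from a standard coordinate-wise Gaussian tail bound combined with a union bound, treating the noise vector $w$ (i.e., the regression error $\bvarepsilon \sim N(\bzero,\sigma^2 \bI_n)$) as conditionally independent of $\bX$. For any fixed vector $u\in\mathbb{R}^n$, $u^T w \sim N(0,\sigma^2\|u\|_2^2)$, so $P(|u^T w|> nt) \le \exp\!\bigl(-n^2 t^2/(2\sigma^2\|u\|_2^2)\bigr)$.

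For the first inequality, each entry of $\tfrac{1}{n}(X_S)^T w$ has the form $\tfrac{1}{n} X_j^T w$ for some $j\in S$, which is centered Gaussian with variance $\sigma^2\|X_j\|_2^2/n^2 \le \sigma^2/n$ by the column-norm assumption $\max_j \|X_j\|_2^2 \le n$. Choosing $t=\lambda/2$ yields $P(|\tfrac{1}{n}X_j^T w|>\lambda/2)\le \exp(-n\lambda^2/(8\sigma^2))$, and a union bound over the $s$ coordinates in $S$ gives the stated probability.

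For the second inequality, the key structural observation is that $\Pi = \bI_n - \tfrac{1}{n}\Xp_S(\Sigp_{SS})^{-1}(\Xp_S)^T = \bI_n - \Xp_S\bigl((\Xp_S)^T\Xp_S\bigr)^{-1}(\Xp_S)^T$ is the orthogonal projector onto the orthogonal complement of the column space of $\Xp_S$. It is therefore symmetric and idempotent with operator norm $1$, so $\|\Pi v\|_2\le \|v\|_2$ for every $v$. For each $j\in S^c$ write $\tfrac{1}{n}(\Xp)_j^T \Pi w = \tfrac{1}{n}(\Pi(\Xp)_j)^T w$, which is centered Gaussian with variance at most $\sigma^2\|(\Xp)_j\|_2^2/n^2$. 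Since $(\Xp)_{ij}=X_{ij}-X_{ip}$, the triangle inequality together with $\|X_j\|_2,\|X_p\|_2\le \sqrt{n}$ yields $\|(\Xp)_j\|_2\le 2\sqrt{n}$ and hence variance at most $4\sigma^2/n$. Plugging $t=\lambda\xi/4$ into the Gaussian tail bound yields $P(|\tfrac{1}{n}(\Xp)_j^T\Pi w|>\lambda\xi/4)\le \exp(-n\lambda^2\xi^2/(128\sigma^2))$, and a union bound over $S^c$ (of cardinality $p-s$) gives the second claim.

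No step is genuinely hard; the mild bookkeeping points are (i) recognizing $\Pi$ as an orthogonal projector so that conditioning on $\bX$ never inflates the variance, and (ii) tracking the extra factor of $2$ in $\|(\Xp)_j\|_2\le 2\sqrt{n}$ coming from the log-ratio construction, which is precisely what produces the constant $128$ in the exponent (as opposed to $32$).
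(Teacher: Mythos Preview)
Your proposal is correct and follows exactly the approach the paper indicates: the paper's own proof is a single sentence deferring to \citet{lin2014} and noting that ``these inequalities hold as a result of the Gaussian tail bound,'' whereas you spell out the coordinate-wise Gaussian tail plus union bound explicitly, including the identification of $\Pi$ as an orthogonal projector and the bound $\|(\Xp)_j\|_2\le 2\sqrt n$ that produces the constant $128$. There is nothing materially different between your argument and what the paper (via its citation) intends.
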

\begin{proof}
This lemma is taken from the proof of \citet{lin2014} and these inequalities hold as a result of the Gaussian tail bound.
\end{proof}

\subsection*{Proof of Theorem 1}
For the optimality, we aim to prove \eqref{goal:S} and \eqref{goal:sub}.
\begin{align}
(\tsigp_{\hat S\hat S},\tsigp_{\hat S\hat S^c})\hat \beta_{-p}-\trhop_{\hat S}+\lambda\{\sgn(\hat \beta_{\hat S-p})-\sgn(\hat \beta_p)1_{s-1}\}&=0,\label{goal:S}\\ 
\|(\tsigp_{\hat S^c\hat S},\tsigp_{\hat S^c\hat S^c})\hat \beta_{-p}-\trhop_{\hat S^c}-\lambda\sgn(\hat \beta_p)1_{p-s}\|_{\infty}&\le \lambda. \label{goal:sub}
\end{align}

\subsection*{Part I: proof of (\ref{goal:S})}
Let $u$ denote $ \{\sgn(\hbetasp)-\sgn(\hat \beta_p)1_{s-1}\}= (\Dsp)^T\sgn(\hat\beta_S)$ for simplicity. Conditioned on Lemma \ref{Lem: inv}, we establish $\hat \beta$ such that
\begin{align}
\hat \beta_{S-p}&=(\tsigp_{SS})^{-1} (\trhop_{S}-\lambda u),\label{Eq:constructbeta}\\ 
\hat\beta_{S^c}&=0,
\end{align}
where \eqref{Eq:constructbeta} is solved from \eqref{goal:S} by replacing $S$ with $\hat{S}$. We rewrite it as
\begin{align*}
\hat\beta_S-\beta^*_S= &\Dsp (\tsigp_{SS})^{-1}\left(\trhop_S-\rhop_S+\Sigp_{SS} \beta_{S-p}^*+\frac 1n (X^p_S)^Tw -\lambda u\right) -\Dsp\beta_{S-p}^* \\            
=& \Dsp F_{SS}(\Dsp)^T\left(\trho_S-\rho_S+\Sigma_{SS} \beta_{S}^* + \frac 1n (X_S)^Tw\right)
+ A (\trho_S-\rho_S) + \frac  1n  A(X_S)^Tw-\lambda \tilde A\sgn(\hat \beta_S).\\
\end{align*}

Let $\eta_1= \|\Dsp F_{SS}(\Dsp)^T\|_{\infty}, \eta_2=\|\Dsp E_{SS}(\Dsp)^T\|_{\infty}$. As $ (\Dsp)^T \Dsp =I_{s-1}$, 
we have
$$
\eta_1=\|\Dsp (\tsigp_{SS})^{-1} (\Dsp)^T \Dsp E_{SS}(\Dsp)^T A\|_{\infty} \le (\eta_1+\phi)\eta_2\phi.
$$
Therefore, by the Lemma \ref{lem:trick} and the closeness condition, for $\eps \le \min (\eps_0, (2\phi)^{-1})$, 
\begin{equation}\label{Eq:eta2}
    \eta_2 \le 2(s-1)^2\|E\|_{\max}\le \eps
\end{equation}
with probability at least $1-Cp^2\exp\left(-\frac{cn\eps^2}{256(s-1)^4\zeta}\right)$. Then
\begin{equation}\label{Eq:eta1}
\eta_1\le \dfrac{\phi^2\eta_2}{1-\eta_2\phi}\le \dfrac{\phi^2\eps}{1-\eps\phi}\le \phi,
\end{equation}
\begin{equation}
\|\tilde A\|_{\infty}\le \phi + \eta_1 \le 2\phi.
\end{equation}

The closeness condition and Lemma \ref{lem:gauss} imply that for $\lambda \le \eps_0$
\begin{equation}\label{Eq:rholam}
\|\trho_S-\rho_S\|_{\infty} \le \lambda, 
\end{equation}
\begin{equation}\label{Eq:Xw}
\|\frac 1n (X_S)^Tw\|_{\infty} \le\lambda/2,
\end{equation}
with probability at least
$1-Cs\exp \left(-cns^{-2}\lambda^2\zeta^{-1} \right)-s\exp\left(-n\lambda^2\sigma^{-2}/8\right)$.
Combining all these inequalities, for $\eps \le \min\{\lambda \phi^{-1}(3\lambda+b\psi)^{-1},\eps_0,\phi^{-1}/2\} $, we have
\begin{align*}
\|\hat\beta_S-\beta^*_S\|_{\infty}\le &\dfrac{\phi^2\eps}{1-\eps\phi}(\lambda +\lambda/2+b\psi)+\phi \|\trho_S-\rho_S\|_{\infty} +\phi \|\frac 1n (X_S)^Tw\|_{\infty} +2\phi\lambda\\
    \le &  9\lambda \phi/2.
\end{align*}
Then we can conclude that $\sgn(\hat\beta_{S})=\sgn(\beta^*_S)$ and \eqref{goal:S} hold.

\subsection*{Part II: proof of (\ref{goal:sub})}
Using \eqref{Eq:constructbeta} and substituting $u$ with $\sgn(\hbetasp)-\sgn(\hat \beta_p)1_{s-1}$, we have 
$$
(\tsigp_{S^cS},\tsigp_{S^cS^c})\hat \beta_{-p}-\trhop_{S^c}-\lambda\sgn(\hat \beta_p)1_{p-s}
=  \tilde G(\trhop_S-\lambda u)-\trhop_{S^c}-\lambda\sgn(\hat \beta_p)1_{p-s}.
$$
Taking the absolute values and using triangular inequalities, we have
$$
\linf{\tilde G(\trhop_S-\lambda u)-\trhop_{S^c}-\lambda\sgn(\hat \beta_p)1_{p-s}}\\
\le  \linf{\tilde G\trhop_S-\lambda Hu-\trhop_{S^c}}+\lambda\linf{Gu+\sgn(\hat \beta_p)1_{p-s}}.
$$
Condition 2 of the main text implies $\linf{Gu+\sgn(\hat \beta_p)1_{p-s}} \le 1-\xi$.
We break the first term as 
$$
\lambda Hu+\trhop_{S^c} -\tilde G\trhop_S = H(\lambda u-\trhop_S)+\{(\trhop_{S^c}-\rhop_{S^c})-G(\trhop_S-\rhop_S)\} + (\rhop_{S^c}-G\rhop_S),
$$
where $\linf{\rhop_{S^c}-G\rhop_S}=\frac 1n\linf{(\Xp_{S^c})^T\Pi w} \le \lambda \xi/4$
with probability at least  $1-(p-s)\exp\left(-n\lambda^2\xi^2/(128\sigma^2)\right)$, according to Lemma \ref{lem:gauss}. We further bound $G(\Dsp)^T$ as
\begin{equation}\label{Eq:Grho}
\begin{aligned}
\linf{G(\Dsp)^T}&=\linf{\frac 1n (\Xp_{S^c})^TX_SA} =\max_{j\in S^c}\|\frac{1}{n}(X_j-X_p)^TX_S A\|_1\\
& \le 2\max_{j\in \{S^c,p\}}\|\frac{1}{n}X_j^TX_SA\|_1 = 2\linf{\Sigma_{\{S^c,p\},S} A}\\
& \le 2\nu\phi. 
\end{aligned}
\end{equation}

We assume that $\eps\le \frac{\lambda\xi}{4+4\nu\phi}$, hence 
$$
\begin{aligned}
\|(\trhop_{S^c}-\rhop_{S^c})-G(\trhop_S-\rhop_S)\|_{\infty} &=\|(\trhop_{S^c}-\rhop_{S^c})-G(\Dsp)^T(\trho_S-\rho_S)\|_{\infty}\\
&\le (2+2\nu \phi)\linf {\trho-\rho} \le \lambda \xi/2
\end{aligned}
$$
holds with probability at least $1-Cp\exp\{-cns^{-2}\eps^2\zeta^{-1}\}$. Since the term $H(\lambda u-\trhop_S)=H(\Dsp)^T\left(\lambda \sgn(\hat\beta_S)-\trho_S\right)$, we have:

$$
\begin{aligned}
    \linf{H(\Dsp)^T}&=\linf{\Sigp_{S^cS}F_{SS}(\Dsp)^T+E_{S^cS}(\tsigp_{SS})^{-1}(\Dsp)^T}\\
    &\le\linf{\Sigp_{S^cS}F_{SS}(\Dsp)^T}+\linf{E_{S^cS}(\tsigp_{SS})^{-1}(\Dsp)^T}\\
    &=\linf{\frac 1n (\Xp_{S^c})^TX_S(A-\tilde A)}+\linf{E_{S^cS}(\Dsp)^T\tilde A}\\
    &\le\linf{\frac 1n (\Xp_{S^c})^TX_S(A-\tilde A)}+2\phi \linf{E_{S^cS}(\Dsp)^T}\\
    &\le 2\nu\eta_1 + 4(s-1)\phi \|E\|_{\max},
\end{aligned}
$$
where the last inequality follows from the analogous trick in \eqref{Eq:Grho}. As mentioned in \eqref{Eq:Xw},we have $\linf{\rho_S}\le \linf{\frac 1n X_S^T w} + \linf{\Sigma_{SS}\beta^*_{S}}\le \lambda/2+b\psi$. Along with \eqref{Eq:rholam}, we have
$$
\linf{\lambda \sgn(\hat\beta_S)-\trho_S} \le \linf{\rho_S}+\linf{\trho_S-\rho_S}+\lambda 
\le 5\lambda/2+b\psi.
$$

Note that \eqref{Eq:eta2} and \eqref{Eq:eta1} yield that $\eta_1\le 2\phi^2 \eta_2\le 4(s-1)^2\phi^2\|E\|_{\max}\le 2\phi^2\eps$. For $\eps \le \frac{\lambda \xi}{(10\lambda+4\psi)(4\nu\phi^2+2\phi)}$,
$$
\begin{aligned}
    \linf{H(\lambda u-\trhop_S)}&\le (5\lambda/2+b\psi)(2\nu\eta_1 + 4(s-1)\phi \|E\|_{\max}) \\
    &\le (5\lambda/2+b\psi)[4\nu\phi^2\eps+2\phi\eps/(s-1)]\\
    &\le (5\lambda/2+b\psi)(4\nu\phi^2+2\phi)\eps\\
    &\le \lambda\xi/4.
\end{aligned}
$$

Gathering them all together, we conclude that for 
$$\eps \le \min \{\frac{1}{2\phi},\eps_0,\frac{\lambda}{\phi(3\lambda+b\psi)},\frac{\lambda \xi}{(10\lambda+4\psi)(4\nu\phi^2+2\phi)}, \frac{\lambda\xi}{4+4\nu\phi},\frac{C_{\min}}{16}\}$$ and $\lambda \le \eps_0$, equations \eqref{goal:S} and \eqref{goal:sub} hold with probability at least $1-C_1p^2\exp\left(-c_1n(s-1)^{-4}\eps^2\zeta^{-1}\right)-C_2p\exp\left(-c_2ns^{-2}\lambda^2\xi^2\zeta^{-1}\right).$

\section*{B Additional simulation results}
In Table 1 of the main text, we have examined the sum of regression coefficients $\sum_{j=1}^p \hat{\beta}_j$ estimated by different Lasso methods under Scenario 1 of Simulation I. The corresponding results under  Scenario 2 and Scenario 3 of Simulation I are reported in Table \ref{tab:s1} and Table \ref{tab:s2}, respectively. As observed in Table 2 of the main text, the sum of regression coefficients estimated by Eric Lasso and Coda Lasso is very close to zero, and the departure from zero for CoCo Lasso or Vani Lasso is more substantial. It is of interest to observe that test results of CoCo Lasso and Vani Lasso may not be significant under certain scenarios, especially under Scenario 2 (i.e., in Table S.1). We take a closer look at the sum of regression coefficients ($\sum_{j=1}^p \hat{\beta}_j$) estimated by CoCo Lasso and Vanilla Lasso and presented histograms of the 100 replicates under each scenario in Figure S.1. It can be seen that the spread of each empirical distribution is relatively large compared to its mean, leading to an insignificant p-value.  Like what have been observed in the main text, the obvious departure of sum of regression coefficients from zero for CoCoLasso and Vani Lasso makes it difficult to interpret the estimated regression model coefficients under the framework of compositional data analysis as explained in Section 2 of the main text.

\begin{table}[h]
\centering
\caption{The average sum of regression coefficients estimated by different Lasso methods  and corresponding t-test p values under Scenario 2.} \label{tab:s1}
\begin{tabular}{llll}
\toprule
(n,p)     & Model & $\sum \beta_j$  & p value \\ 
\midrule
(100,200) & Eric  & -2.2e-08        & 0.924   \\
          & Coda  & 3.5e-08         & 0.847   \\
          & CoCo  & 1.9e-02         & 0.305   \\
          & Vani  & 1.8e-02         & 0.343   \\
\hline
(250,400) & Eric  & -5.0e-07        & 0.269   \\
          & Coda  &  6.8e-08        & 0.846   \\
          & CoCo  & -8.0e-03        & 0.650   \\
          & Vani  & -1.2e-02        & 0.508   \\
\hline
(500,500) & Eric  & 2.9e-07         & 0.532   \\
          & Coda  & 1.9e-07         & 0.530   \\
          & CoCo  & 4.0e-03         & 0.695   \\
          & Vani  & 5.0e-03         & 0.631  \\ 
\bottomrule
\end{tabular}
\end{table}

\begin{table}[h]
\centering
\caption{The average sum of regression coefficients estimated by different Lasso methods  and corresponding t-test p values under Scenario 3.} \label{tab:s2}
\begin{tabular}{@{}llll@{}}
\toprule
(n,p)     & Model & $\sum \beta_j$  & p value   \\ 
\midrule
(100,200) & Eric  & 6.4e-08         & 0.047     \\
          & Coda  & 6.2e-08         & 0.047     \\
          & CoCo  &-6.5e-01         & 4.7e-24   \\
          & Vani  &-1.5e-01         & 0.011     \\
\hline
(250,400) & Eric  & 1.4e-08         & 0.793     \\
          & Coda  &-3.0e-08         & 0.138     \\
          & CoCo  &-4.8e-02         & 9.9e-33   \\
          & Vani  & 5.0e-02         & 0.238     \\
\hline
(500,500) & Eric  &-4.1e-08         & 0.549     \\
          & Coda  &-1.7e-09         & 0.930     \\
          & CoCo  &-3.8e-02         & 5.0e-29 \\
          & Vani  & 8.6e-2          & 0.006     \\ 
\bottomrule
\end{tabular}
\end{table}

\begin{figure} [h]
\centering
\includegraphics[width=0.8\linewidth]{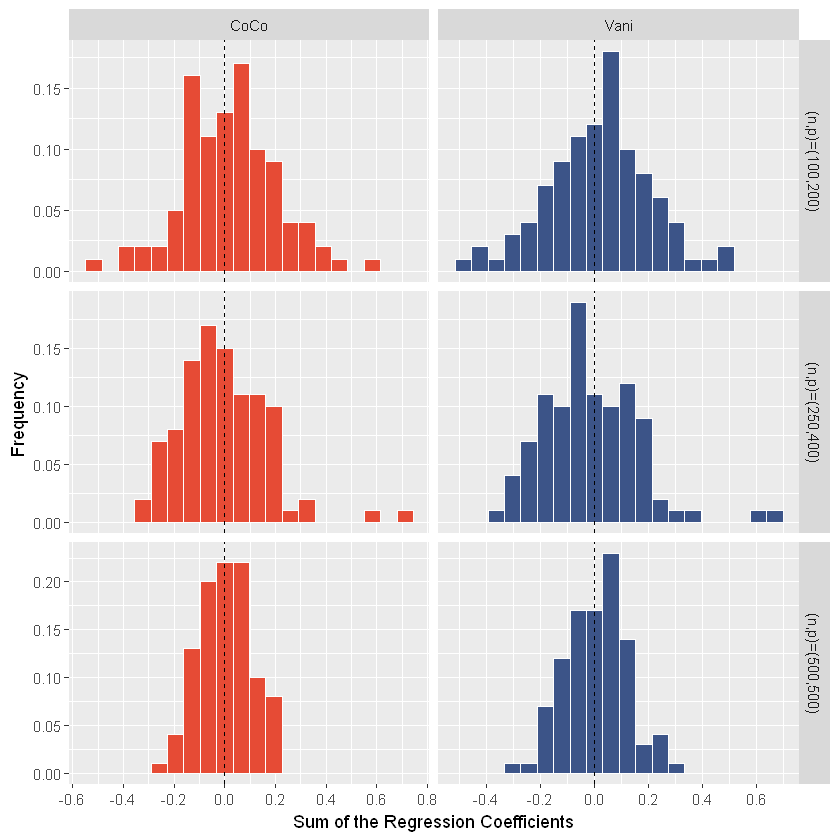}
\caption{The histograms of the sum of coefficients under Scenario 2 for model CoCoLasso and Vanilla Lasso.} \label{fig:hists}
\end{figure}

\clearpage
We next report ROC curves under Scenario 2 of Simulation II in Figure S.2. Similar to patterns displayed in Figure 1 of the main text, Eric Lasso tends to have the best performance among the four methods under Scenario 2 of Simulation II.

\begin{figure}[htb]
\centering
\begin{tabular}{@{}cccc@{}}
\includegraphics[scale=0.36]{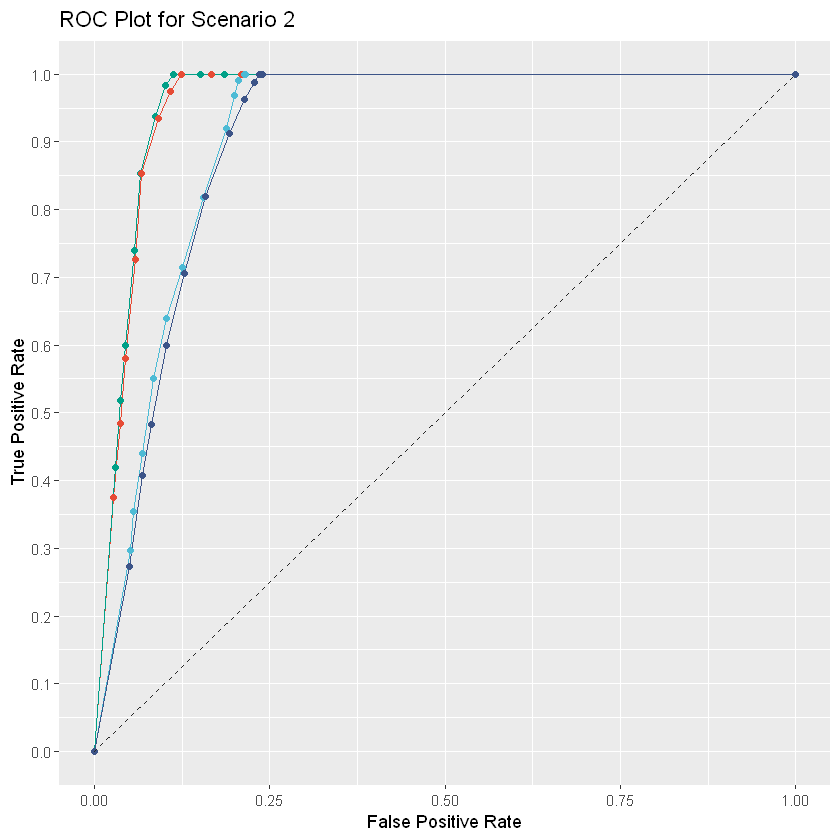} &
\includegraphics[scale=0.36]{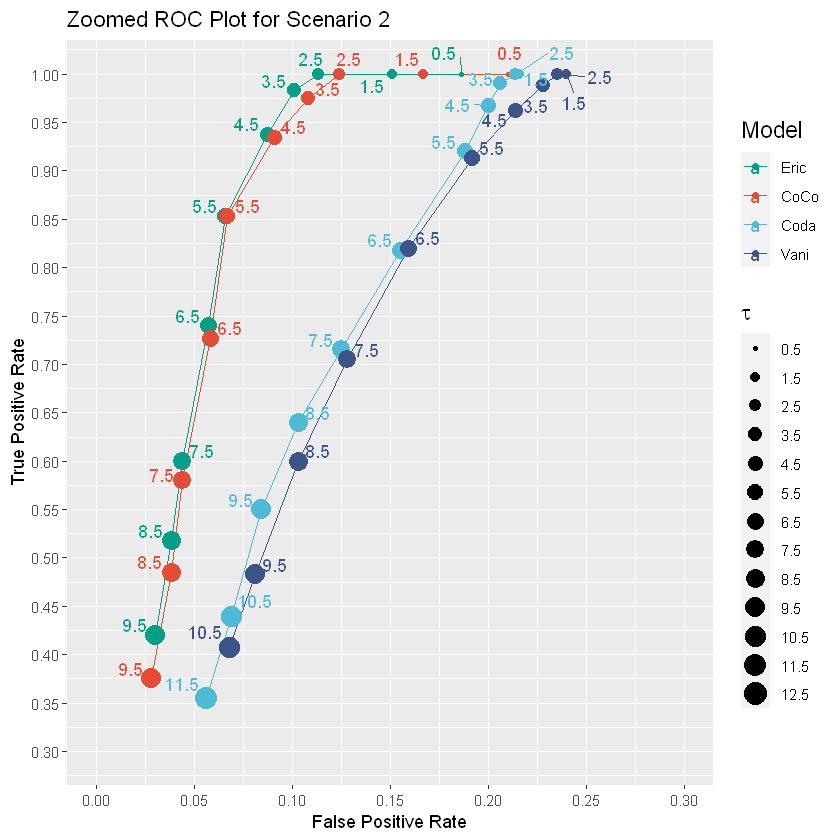}
\end{tabular}
\caption{ROC curves with different $\tau$ values under Scenario 2 of Simulation II. The left panel is ROC curves at original FPR and TPR scales and the right panel zooms in specific regions of FPR and TPR to better distinguish different methods.} 
\end{figure}

\end{document}